\documentclass[11pt]{article}

\pdfoutput=1
\usepackage{fullpage}           
\usepackage{amsfonts}           
\usepackage{amsthm, amssymb}             
\usepackage{xspace}             
\usepackage{graphicx}
\usepackage{adjustbox}          
\usepackage{multicol}            
\usepackage{url}
\usepackage{enumerate, enumitem}  
\usepackage{subcaption}
\usepackage[usenames]{xcolor} 

\usepackage{amsmath, hyperref, nicefrac}
\usepackage[capitalize]{cleveref}
\usepackage{thm-restate}
\usepackage{parskip}
\usepackage{mathtools}
\usepackage{multirow}
\usepackage{algorithmic}
\usepackage{algorithm}
\usepackage{array}

\colorlet{darkgreen}{green!45!black}

\newcommand{\eps}{\varepsilon}

\newcommand{\cost}{\text{cost}}
\newcommand{\calS}{\mathcal{S}}

\newcommand{\calE}{\mathcal{E}}

\newcommand{\rank}{\textbf{rank}}
\newcommand{\dist}{\text{dist}}
\newcommand{\R}{\mathbb{R}}

\newcommand{\cand}{\mathbb{C}}
\newcommand{\greedy}{\mathcal{A}}

\newcommand{\poly}{\text{poly}}

\newtheorem{lemma}{Lemma}

\newtheorem{theorem}{Theorem}

\newtheorem{definition}{Definition}

\usepackage{wrapfig}
\newcommand{\erclogowrapped}[1]{%
\setlength\intextsep{0pt}%
\begin{wrapfigure}[3]{r}{#1*\real{1.1}}%
\includegraphics[width=#1]{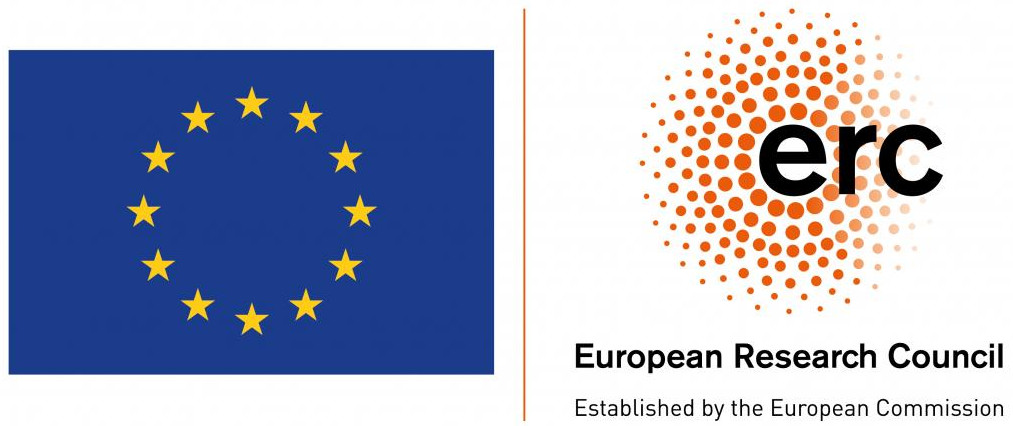}%
\end{wrapfigure}%
}

\newcounter{sideremark}

\usepackage{amsmath}
\title{Improved Coresets for Euclidean $k$-Means\footnote{An extended abstract appeared at NeurIPS 2022.}}
\author{Vincent Cohen-Addad \and Kasper Green Larsen \and 
  David Saulpic
  \and
  Chris Schwiegelshohn \and Omar Ali Sheikh-Omar
}
\date{}

\begin{document}

\maketitle

\begin{abstract}
Given a set of $n$ points in $d$ dimensions, the Euclidean $k$-means problem (resp. the Euclidean $k$-median problem) consists of finding $k$ centers such that the sum of squared distances (resp. sum of distances) from every point to its closest center is minimized. The arguably most popular way of dealing with this problem in the big data setting is to first compress the data by computing a weighted subset known as a coreset and then run any algorithm on this subset. The guarantee of the coreset is that for any candidate solution, the ratio between coreset cost and the cost of the original instance is less than a $(1\pm \varepsilon)$ factor. The current state of the art  coreset size is $\tilde O(\min(k^{2} \cdot \varepsilon^{-2},k\cdot \varepsilon^{-4}))$ for Euclidean $k$-means and $\tilde O(\min(k^{2} \cdot \varepsilon^{-2},k\cdot \varepsilon^{-3}))$ for Euclidean $k$-median. The best known lower bound for both problems is $\Omega(k \varepsilon^{-2})$. In this paper, we improve the upper bounds $\tilde O(\min(k^{3/2} \cdot \varepsilon^{-2},k\cdot \varepsilon^{-4}))$ for $k$-means and $\tilde O(\min(k^{4/3} \cdot \varepsilon^{-2},k\cdot \varepsilon^{-3}))$ for $k$-median. In particular, ours is the first provable bound that breaks through the $k^2$ barrier while retaining an optimal dependency on $\varepsilon$. 
\end{abstract}

\section{Introduction}

Coresets have become a staple in the design of algorithms for large data sets. In the most general setting, a coreset compresses the data set in such a way that for any set of previously specified candidate queries, the cost of evaluating the query and the cost of the coreset are similar, up to an arbitrary small distortion.

A popular subject in coreset literature is the Euclidean $k$-means problem. Here, we are given $n$ points $P$ in $d$ dimensions and our task is to find a set of $k$ points $C$ called centers minimizing
$\cost(P,C):=\sum_{p\in P} \min_{c\in C}\|p-c\|^2,$
where $\|p-c\| = \sqrt{\sum_{i=1}^d (p_i-c_i)^2}$ denotes the Euclidean distance.
In this case, a coreset is a weighted subset of the input such that difference between the cost for any set of $k$ centers $C$ on the coreset and the cost on the original point set $P$ is at most $\varepsilon\cdot \cost(P,C)$.
Since its initial study by \cite{HaM04}, the Euclidean $k$-means problem has received arguably the most attention out of any coreset problem. The current state of the art by \cite{CGSS22} yields coresets of size $\tilde{O}(k\varepsilon^{-2} \min(k,\varepsilon^{-2}))$, where $\tilde{O}(x)$ hides multiplicative factors that are polylogarithmic in $x$. Similarly, for Euclidean $k$-median where we aim to minimize the sum of Euclidean distances, rather than the sum of squared Euclidean distances, the best known upper bounds are $\tilde{O}(k\varepsilon^{-2} \min(k,\varepsilon^{-1}))$. Unfortunately, there is still a gap towards the best known lower bound of $\Omega(k\varepsilon^{-2})$ by \cite{CGSS22}.

We thus have the option of obtaining either an optimal dependency on $k$, at the cost of a suboptimal dependency on $\varepsilon^{-1}$, or an optimal dependency on $\varepsilon^{-1}$, at the cost of a suboptimal dependency on $k$. While these bounds suggest that the lower bound is the correct answer, things are not as clear on closer inspection. Quadratic dependencies on $k$ become necessary for many forms of analysis and so far, it is unknown how to avoid this loss while retaining an optimal dependency on the remaining parameters\footnote{Coresets of size $\tilde{O}(kd / \varepsilon^{2})$ are also known (\cite{CSS21}). This offers improvements in low dimensions, but generally a dependency on $d$ is considered worse than a dependency on $k$ or $\varepsilon^{-2}$.}. 
 
In this paper we break through this barrier. Specifically, we show that coresets of size $\tilde{O}(k^{3/2} \varepsilon^{-2})$ for Euclidean $k$-means and $\tilde{O}(k^{4/3} \varepsilon^{-2})$ for Euclidean $k$-median exist. 
In our view, this is further and arguably stronger evidence that the $k \varepsilon^{-2}$ bound will be the correct answer. Another contribution exists on a technical level. Previously, most coreset constructions for high dimensions heavily relied on terminal embeddings to facilitate the analysis. 
In this paper, we present a novel method that avoids terminal embeddings. We expect that our technique may have further applications for coreset constructions in Euclidean spaces.

\subsection{Techniques}
Starting point of our work is the framework introduced by \cite{CSS21} and specifically \cite{CGSS22}. Prior to \cite{CGSS22}, all coreset analyses required a dependency of at least $k\cdot d \cdot \varepsilon^{-2}$. 
To illustrate why, suppose we are sampling points from some distribution and wish to use the sampled points as an estimator for the true cost of any candidate solution. The analysis then consists of (1) a bound on the variance $\sigma^2$ of the estimator and (2) a bound on the number of solutions $|\mathbb{N}|$ to be approximated. This typically results in coresets of size $O(\varepsilon^{-2} \cdot \sigma^2 \cdot \log |\mathbb{N}|)$.
When enumerating all (discretized) candidate solutions (henceforth called a \emph{net}) in $d$ dimensions, virtually all known techniques result in $|\mathbb{N}| \approx \exp(k\cdot d)$. 
The dependency on $d$ may be reduced to $\log(k/\varepsilon)\varepsilon^{-2}$ using dimension reduction techniques. 

To bypass this, \cite{CGSS22} used a chaining-based analysis to define a sequence of discretized candidate solutions. Specifically, they showed that there exist discretizations $\mathbb{N}_{\alpha}$ of size $\exp(k\alpha^{-2})$ such that for any point $p$ and any solution $\calS$, there exists a solution $\calS_{\alpha}$ with 
$$|\cost(p,\calS_{\alpha}) - \cost(p,\calS)|\leq \alpha\cdot \cost(p,\calS),$$
where $\cost(p,\calS_{\alpha})=\min_{s\in \calS_{\alpha}} \|p-s\|^2$ and $\cost(p,\calS)=\min_{s\in \calS}\|p-s\|^2$.
The idea of the analysis is then to write $\cost(p,\calS)$ as a telescoping sum 
$$\cost(p,\calS) = \sum_{h=1}^{\infty} \cost(p,\calS_{2^{-(h+1)}}) - \cost(p,\calS_{2^{-h}})$$
and show that the sampled points achieve concentration for each summand. The number of candidate solutions is now $|\mathbb{N}_{2^{-(h+1)}}| \cdot |\mathbb{N}_{2^{-h}}| \approx \exp(k\cdot 2^{2h})$ but the difference in cost is $2^{-h}\cost(p,\calS)$. This directly leads to a variance of the order $2^{-2h}\sigma^2$, where $\sigma^2$ is the variance of the basic estimator. Thus, the increase in net size is countered by the decrease and variance. Using this technique, \cite{CGSS22}  obtained a coreset of size roughly $\varepsilon^{-2} k\sigma^2$.

To obtain a bound on $\sigma^2$, the framework by \cite{CSS21} proposed an algorithm that first computes a solution $\greedy$ with strong structural properties and then samples any point $p$ proportionate $\cost(p,\greedy)$. To simplify the exposition, we assume that $\greedy$ is the optimum, every cluster has identical cost and every point has identical cost. For this special case, the distribution of \cite{CSS21} turns out to be equivalent to uniform sampling.
For any given solution $\calS$, let $k_i$ be the number of clusters of $\greedy$ whose points are served at cost $2^i$ times their cost in $\greedy$ for $i>2$\footnote{Observe that if one points of a cluster $C_j$ of $\greedy$ costs $2^i$ times more for $i\geq 3$, then all points from $C_j$ do likewise (up to constant factors).}, i.e. $2^i = \frac{\cost(p,\calS)}{\cost(p,\greedy)}$.
\cite{CSS21} showed that their sampling distribution leads to a variance of the order $\sigma^2 \approx \left(\frac{k\cdot k_i 2^i}{(k+k_i\cdot 2^i)^2}\right)\cdot  \min(\varepsilon^{-2},2^i,k)$, which yields the $\tilde{O}(k\varepsilon^{-2}\min(k,\varepsilon^{-2}))$ bound.

To improve either the variance or the net size has to reduced. Unfortunately, is unlikely that reducing $\sigma^2$ will be possible as the bounds on $\sigma^2$ obtained by \cite{CGSS22} are tight up to constant factors. Our main goal is to find a net with an finer error of 
$$|\cost(p,\calS_{\alpha}) - \cost(p,\calS)|\leq 2^{-h}\cdot \sqrt{\cost(p,\calS),\cost(p,\greedy)}.$$
In the case of $\min(2^i,\varepsilon^{-2},k)=2^i$, this leads to a reduced variance for the $h$-th summand of the order $2^{-2h}\cdot \frac{(\sqrt{\cost(p,\calS)\cost(p,\greedy)})^2}{\cost(p,\greedy)^2} \cdot \left(\frac{k\cdot k_i}{(k+k_i\cdot 2^i)^2}\right)\leq 2^{-2h}\cdot  \left(\frac{k\cdot k_i \cdot 2^i}{(k+k_i\cdot 2^i)^2}\right).$
To find such a net, we now have two options. First, we can essentially use the previous nets and rescale $2^{-h}$ by a factor $2^{-i/2}$. Unfortunately, this leads to nets of size $\exp(k\cdot 2^{2h}2^i)$, so any gain in reducing the variance is countered by an increase in the net size.

The novelty in our approach now lies in showing that a net of size $\exp(k\cdot k_i \cdot 2^{2h})$ exists. Combining the two net sizes with the improved variance bound results in a coreset of size roughly $\varepsilon^{-2} \log \left( \exp(k\cdot \min(k_i,2^i) \cdot 2^{2h})\right)\cdot  2^{-2h}\cdot  \left(\frac{k\cdot k_i \cdot 2^i}{(k+k_i\cdot 2^i)^2}\right) $, which after some calculation is shown to be of the order $O(k^{1.5}\varepsilon^{-2})$.

For the remainder of this section we will illustrate how these improved nets can be obtained. The key idea already appears in the case of a single center. Suppose that $s$ is a candidate center. Then we can show that there always exists a subspace with orthogonal basis $U$ spanned by a set $T$ of at most $O(\alpha^{-2})$ points of $P$ such that 
$$ |p^T(I-UU^T) s| \leq \alpha\cdot \|(I-UU^T)p\|\cdot \|(I-UU^T)s\|.$$
We can then write
$$ \|p-s\|^2 = \|U(p-s)\|^2 + \|(I-UU^T)p\|^2 + \|(I-UU^T)s\|^2 - 2p^T(I-UU^T) s.$$
The error for the terms $\|Up-s\|^2$ and $\|(I-UU^T)s\|^2$ can be made negligibly small using a net of size $\exp(\alpha^{-2})$. Thus the main loss in error comes from the $p^T(I-UU^T) s$ term. The key insight is that when adding the center $c_i$ of point $p$ in solution $\greedy$, we have $\|(I-UU^T)p\|^2 \leq \sqrt{\cost(p,\greedy)}$. Thus, when adding all the $k_i$ centers of clusters that cost $2^i$ more in $\calS$ than in $\greedy$ to $T$ the net size becomes $\exp(\alpha^{-2}+k_i)$.
By composing these nets for $k$ candidate centers, we then obtain our desired bound.

\subsection{Related Work}

There has been a tremendous amount of work on coresets for Euclidean $k$-means and $k$-median following the work in
\cite{BachemL018,BachemLL18,BakerBHJK020,BFS21,BecchettiBC0S19,BCJ22,BravermanJKW21,
Chen09,Cohen-AddadL19,CSS21,Cohen-AddadSS21,FL11,FeldmanSS20,
FengKW21,FGSSS13,HaK07, HaM04, huang2020coresets,HuangJLW18, HuangJV19, LS10, SSS19,SchwiegelshohnS22,SohlerW18}. For a detailed comparison, we refer to Table \ref{table:core}.

\begin{table}[H]
\begin{center}
\begin{tabular}{r|c}
Reference & Size (Number of Points) 
\\
\hline\hline
\multicolumn{2}{l}{{\bf Coreset Bounds in Euclidean Spaces}} \\
\hline\hline
\multicolumn{2}{l}{Lower Bounds} \\\hline\hline
Baker, Braverman, Huang, Jiang,  & \multirow{2}{*}{$\Omega(k\cdot \varepsilon^{-1/2})$} \\
Krauthgamer, Wu (ICML'19)~\cite{BravermanJKW19} & 
\\\hline
Cohen-Addad, Larsen, Saulpic,  & \multirow{2}{*}{$\Omega(k \cdot \varepsilon^{-2})$} \\
 Schwiegelshohn \cite{CGSS22} (STOC'22) & \\ 
\hline
\hline
\multicolumn{2}{l}{Upper Bounds} \\\hline\hline
Har-Peled, Mazumdar (STOC'04)~\cite{HaM04} & $O(k\cdot \varepsilon^{-d}\cdot \log n)$ 
\\
\hline
Har-Peled, Kushal (DCG'07)~\cite{HaK07} & $O(k^3 \cdot\varepsilon^{-(d+1)})$  
\\\hline
Chen (Sicomp'09)~\cite{Chen09} & $O(k^2 \cdot d \cdot \varepsilon^{-2}\cdot \log n)$ 
\\\hline
Langberg, Schulman (SODA'10)~\cite{LS10} &$O(k^3\cdot d^2 \cdot \varepsilon^{-2})$ 
\\\hline
Feldman, Langberg (STOC'11)~\cite{FL11} & $O(k\cdot d\cdot\varepsilon^{-2z})$ 
\\\hline
Feldman, Schmidt, Sohler (Sicomp'20)~\cite{FeldmanSS20} & $O(k^3\cdot\varepsilon^{-4})$ 
\\\hline
Sohler, Woodruff (FOCS'18)~\cite{SohlerW18} & $O(k^2\cdot\varepsilon^{-O(z)})$  
\\\hline
Becchetti, Bury, Cohen-Addad, Grandoni,  & \multirow{2}{*}{$O(k\cdot\varepsilon^{-8})$}\\ Schwiegelshohn (STOC'19)~\cite{BecchettiBC0S19} & \\ 
\hline
Huang, Vishnoi (STOC'20)~\cite{huang2020coresets} & $O(k\cdot\varepsilon^{-2-2z})$ 
\\
\hline
Bravermann, Jiang, Krautgamer, Wu (SODA'21)~\cite{BravermanJKW21} & \multirow{1}{*}{$ O (k^2 \cdot\eps^{-4})$} 
\\
\hline
Cohen-Addad, Saulpic, Schwiegelshohn (STOC'21)~\cite{CSS21} & $\tilde O (k\cdot\eps^{-2-\max(2,z)})$ 
\\\hline
Cohen-Addad, Larsen, Saulpic,  & \multirow{2}{*}{$O(k \cdot \varepsilon^{-2} \cdot \min(k,\varepsilon^{-z}))$} \\
 Schwiegelshohn \cite{CGSS22} (STOC'22) & \\ 
\hline
\large \textbf{This paper} & $ O (k \cdot \varepsilon^{-2} \cdot \min(k^{z/(z+2)},\varepsilon^{-z}))$ 
\end{tabular}
\end{center}
\caption{Comparison of coreset sizes for $k$-means ($z=2$) and $k$-median ($z=1$) in Euclidean spaces. 
\cite{BravermanJKW19} only applies to $k$-median and \cite{BecchettiBC0S19,FeldmanSS20} only apply to $k$-means. 
\cite{SohlerW18} runs in exponential time, which has been addressed by Feng, Kacham, and Woodruff \cite{FKW19}.
Aside from~\cite{HaK07,HaM04}, the algorithms are randomized and succeed with constant probability. Any polylog factors have been omitted in the upper bounds.}
\label{table:core}
\end{table}

Almost as prolific is the catalogue of work on dimension reduction for clustering problems in Euclidean spaces, see ~\cite{BoutsidisMD09,BoutsidisZD10,BoutsidisZMD15,CW22a,CW22b,CEMMP15,Cohen-AddadS17,DrineasFKVV04,FKW19,MakarychevMR19}. The arguably most important dimension reduction technique for coresets are terminal embeddings, see~\cite{ChN21,ElkinFN17,MahabadiMMR18,NaN18}. 

Further work on coresets considering objects other than points as centers \cite{BJKW21,FeldmanMSW10,HuangSV21} or other objectives all together 
\cite{BoutsidisDM13,HuangSV20,JKLZ21,KarninL19,MaiMR21,MK18,MunteanuSSW18,PhillipsT20,
TukanMF20}.
For further reading, we refer the interested reader to recent surveys \cite{Feldman20,MunteanuS18}.

\section{Preliminaries and Setup}
\label{sec:prelim}

First, we require the following basic notions. For a point $p\in \mathbb{R}^d$, we denote $\|p\|_2 = \sqrt{\sum_{i=1}^d p_i^2}$ to be the Euclidean norm of $p$ and $\|p\|_1 = \sum_{i=1}^d |p_i|$. The distinct number of points in a point set $P$ is denoted by $\|P\|_0$. Note that the true number of points $|P|$ may be larger than $\|P\|_0$ as different points may lie on the same coordinates.
Given a solution $\mathcal{S}$ consisting of at most $k$ centers, and any subset $P'\subset P$ we use $\cost(P',\calS):= \sum_{p\in P'} \cost(p,\calS) =\sum_{p\in P} \min_{s\in \calS} w_p\cost(p,s),$ where $\cost(p,s) = \|p-s\|^2$ for Euclidean $k$-means and $\cost(p,s) = \|p-s\|$ for Euclidean $k$-median and $w_p$ is a non-negative weight (in the basic case this simply $1$ whereas for the coreset it can be any non-negative number). 
To unify the notation, we will often write $\cost(p,s) = \|p-s\|^z$ where $z=1$ corresponds to $k$-median and $z=2$ corresponds to $k$-means.
We also denote by $v^{\calS} \in \mathbb{R}^{\|P\|_0}$ the cost vector associated with the point set $P$ and solution $\calS$, that is $v_p^{\calS} := w_p\cost(p,s)$. Note that $\|v^{\calS}\|_1 = \cost(P,\calS)$.
The classic coreset guarantee is to show that for any solution $\calS$ the designated coreset $\Omega$ satisfies
$$ |\cost(\Omega,\calS) - \cost(P,\calS)|\leq \varepsilon \cdot \cost(P,\calS).$$
We will later introduce an equivalent statement that uses cost vectors. It will also be convenient to consider coresets with an additive error $E$ which satisfy
$$ |\cost(\Omega,\calS) - \cost(P,\calS)|\leq \varepsilon \cdot \cost(P,\calS) + E.$$

\cite{CGSS22} showed that any coreset algorithm that works for instances with the following assumptions can be extended to general instances: 
\begin{description}
\item[Assumption 1:] $\|P\|_0 \in \poly(k,\varepsilon^{-1})$.
\item[Assumption 2:] $d\in O(\log(k/\varepsilon) \cdot \varepsilon^{-2})$.
\item[Assumption 3:] $w_p = 1$, for all $p\in P$. Note that this only applies to the weights of the original points; the coreset points will have different weights.
\item[Assumption 4:] There exists a solution $\greedy$ such that
\begin{enumerate}
\item $|\greedy| \in O(k)$.
\item For any two clusters $C_i$, $C_j$ induced by $\greedy$, $\cost(C_i,\greedy) \leq 2\cdot \cost(C_j,\greedy)$.
\item For any cluster $C_j$ induced by $\greedy$ and any two points $p,p'\in C_j$, $\cost(p,\greedy) \leq 2\cdot \cost(p',\greedy)$
\end{enumerate}
\end{description}

To keep this paper self contained, we will detail the validity of these assumptions at the end of this section.  

The sampling procedure is now very simple. Given that these aforementioned assumptions hold, we sample a points $p\in C_j$ with probability $\mathbb{P}_p :=\frac{1}{|C_j|}\cdot \frac{\cost(C_j,\greedy)}{\cost(P,\greedy)}$ and add it to the designated coreset $\Omega$. Furthermore, $p$ receives the weight $w_p = \frac{1}{\mathbb{P}_p}$. Overall, our basic cost estimator for any candidate solution $\calS$ is therefore
$$ \cost(\Omega,\calS):= \frac{1}{|\Omega|} \sum_{p\in \Omega} \cost(p,\calS) \cdot w_p.$$

It is routine to check that $\mathbb{E}[\cost(\Omega,\calS)] = \cost(P,\calS)$. The remainder of this section will be devoted to showing that $\Omega$ satisfies for all $\calS$
\begin{equation}
\label{eq:coresetguarantee}
|\cost(\Omega,\calS) - \cost(P,\calS)|\leq \frac{\varepsilon}{\log^2 \varepsilon^{-1}} \cdot \left( \cost(P,\calS) + \cost(P,\greedy)\right)
\end{equation}
Using the framework from \cite{CSS21}, this implies an $O(\varepsilon)$ coreset in general.

\subsection{Justification of the Assumptions}

To obtain the first assumption, we compute any coreset of size $\text{poly}(k,\varepsilon^{-1})$ in preprocessing. Constructions of these coresets are abundant in literature and any one would serve our needs.

To obtain the second assumption, we apply a terminal embedding on the coreset. A terminal embedding guarantees that for any point $p\in P$ and any point $q\in \mathbb{R}^d$, where $d$ is the dimension of the points of $P$, we have a mapping $f$ s.t.
$$ \|p-q\|^2 = (1\pm \varepsilon) \|f(p)-f(q)\|^2.$$ 
\cite{NaN18} showed that for any $n$-point set a terminal embedding of target dimension $\tilde{O}(\varepsilon^{-2}\log n )$ exists, which, combined with the first assumption, yields the desired target dimension.

To obtain the third assumption, we merely have to ensure that the weights of the coreset points are integers. A number of constructions satisfy this but a simple way of always enforcing this is to scale and round the weights (see Corollary 2 of \cite{CSS21}).

The fourth assumption follows from the preprocessing of \cite{CSS21}, see Sections 3.3 and 4.1 of that reference. Similarly, the same preprocessing, given that $\greedy$ is an $O(1)$-approximation, also shows that Eq \ref{eq:coresetguarantee} implies that the overall construction will be a coreset (subject to rescaling $\varepsilon$ by constant factors), see Section 4.2 of the aforementioned reference.
We must point out that a point set cannot always be decomposed into only sets that satisfy the aforementioned assumption. Nevertheless \cite{CGSS22} showed that every other case require only $\tilde{O}(k/\varepsilon^2)$ many sampled points (compared Lemmas 15 and 17 of that reference.)

Finally, we remark that these steps and assumptions immediately also apply to the $k$-median problem.

\section{Analysis}
\label{sec:main}

In this section we prove the following theorems.

\begin{theorem}
\label{thm:main}
For any set of points in $d$ dimensional Euclidean space, there exists a coreset for $k$-means clustering of size $\tilde{O}(k^{3/2} \varepsilon^{-2})$.
\end{theorem}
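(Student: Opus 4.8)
The plan is to establish the concentration estimate in Eq.~\eqref{eq:coresetguarantee} for the importance-sampling estimator defined in Section~\ref{sec:prelim}; by the reductions recalled there, this yields the stated coreset. Throughout I would work under Assumptions~1--4, fix a candidate solution $\calS$, and for each scale $i$ let $k_i$ denote the number of clusters of $\greedy$ whose points are served at cost $\approx 2^i\cost(p,\greedy)$ under $\calS$. Following \cite{CGSS22}, I would write $\cost(p,\calS)$ as a telescoping sum over the dyadic scales $2^{-h}$ and argue that the sampled estimator concentrates on each increment after a union bound over a suitable net $\mathbb{N}_h$ of discretized solutions; the only new ingredient is the choice of net.

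The heart of the argument is this refined net. For a single candidate center $s$ I would first place into a set $T$ the $\greedy$-centers of the $k_i$ clusters relevant at scale $i$, and then greedily add input points: while some $p\in P$ violates $|p^\top(I-UU^\top)s|\le\alpha\,\|(I-UU^\top)p\|\,\|(I-UU^\top)s\|$ for the current span $U$ of $T$, add that $p$ to $T$. Each such step shrinks $\|(I-UU^\top)s\|^2$ by a factor $(1-\alpha^2)$, so, using Assumption~1 to bound the relevant range of norms, the procedure halts after $\tilde O(\alpha^{-2})$ additions, giving $|T|=k_i+\tilde O(\alpha^{-2})$ with the inner-product property holding for the final $U$; moreover, since each relevant point's $\greedy$-center lies in $T$, we get $\|(I-UU^\top)p\|^2\le\cost(p,\greedy)$ for all such $p$. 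Expanding
\[
\|p-s\|^2=\|U(p-s)\|^2+\|(I-UU^\top)p\|^2+\|(I-UU^\top)s\|^2-2\,p^\top(I-UU^\top)s,
\]
the term $\|U(p-s)\|^2$ is captured up to multiplicative error $\alpha$ by discretizing the $O(|T|)$-dimensional vector $U(p-s)$ with $\exp(O(|T|))$ net elements, the scalars $\|(I-UU^\top)s\|^2$ by $\exp(\tilde O(k))$ values over the $k$ centers, the residual $\|(I-UU^\top)p\|^2\le\cost(p,\greedy)$ is charged directly to the additive $\cost(P,\greedy)$ budget, and the cross term is bounded by $\alpha\sqrt{\cost(p,\greedy)}\cdot\|(I-UU^\top)s\|\le\alpha\big(\sqrt{\cost(p,\calS)\cost(p,\greedy)}+\cost(p,\greedy)\big)$ when $s$ is $p$'s center in $\calS$. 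Composing the per-center nets over the $k$ centers of $\calS$ and union-bounding over the $\polylog(1/\eps)$ choices of scale $i$ and the $\binom{k}{k_i}$ choices of relevant clusters (a factor absorbed by the main term), this yields $|\mathbb{N}_h|=\exp(\tilde O(k\cdot\min(k_i,2^i)\cdot 2^{2h}))$ with $\alpha=2^{-h}$; the $\min$ with $2^i$ comes from the alternative of rescaling the coarse \cite{CGSS22} net by $2^{-i/2}$, which costs $\exp(k\cdot 2^i\cdot 2^{2h})$, and we keep whichever net is smaller.

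With $\mathbb{N}_h$ in hand, the remainder follows the concentration-and-union-bound scheme of \cite{CGSS22}. For a fixed scale $h$ and fixed net elements, the estimator of the $h$-th increment is a sum of $|\Omega|$ independent bounded terms; since only the $k_i$ relevant clusters contribute and the per-point error is now $2^{-h}\sqrt{\cost(p,\calS)\cost(p,\greedy)}$ rather than $2^{-h}\cost(p,\calS)$, Bernstein's inequality together with Assumption~4 gives a normalized per-sample variance of order $2^{-2h}\cdot\frac{k\,k_i\,2^i}{(k+k_i 2^i)^2}$. Union-bounding over $\mathbb{N}_h$ and over the $\tilde O(1)$ scales and profiles, it suffices to take
\[
|\Omega|=\tilde O\!\left(\eps^{-2}\cdot k\,\min(k_i,2^i)\,2^{2h}\cdot 2^{-2h}\cdot\frac{k\,k_i\,2^i}{(k+k_i 2^i)^2}\right)=\tilde O\!\left(\eps^{-2}\cdot\frac{k^2\,k_i\,2^i\,\min(k_i,2^i)}{(k+k_i 2^i)^2}\right).
\]
Since $\min(k_i,2^i)\le\sqrt{k_i 2^i}$, the parenthesized quantity is at most $\eps^{-2}\,\frac{k^2(k_i2^i)^{3/2}}{(k+k_i2^i)^2}$, which as a function of $x=k_i 2^i$ is maximized at $x\asymp k$ with value $O(k^{3/2})$; hence $|\Omega|=\tilde O(k^{3/2}\eps^{-2})$ uniformly in $i$, which is Eq.~\eqref{eq:coresetguarantee} and proves the theorem. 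The remaining scales, where $2^i\gtrsim\eps^{-2}$ (points on which $\calS$ is catastrophically worse than $\greedy$), are subsumed by the separate $\tilde O(k\eps^{-2})$ argument of \cite{CGSS22}.

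The step I expect to be the main obstacle is the refined net construction. Concretely: (i) showing the greedy subspace procedure terminates in $\tilde O(\alpha^{-2})$ steps \emph{and} that the residual inner-product property survives for the final $U$ — which is why the relevant $\greedy$-centers must be inserted first, so that further augmentation of $T$ only strengthens $\|(I-UU^\top)p\|^2\le\cost(p,\greedy)$; (ii) verifying that the non-cross terms really are negligible, in particular that one $\exp(O(|T|))$-size discretization of $U(p-s)$ works for all points $p$ simultaneously, and handling the $\min_{s'\in\calS}$ in $\cost(p,\calS)$, which forces the net to also encode a near-optimal assignment and this extra factor to be absorbed; and (iii) ensuring that composing over the $k$ centers of $\calS$ does not inflate the net beyond $\exp(\tilde O(k\,\min(k_i,2^i)\,2^{2h}))$. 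By comparison, the chaining machinery and the final optimization over $i$ are routine given \cite{CGSS22}.
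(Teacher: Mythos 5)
You follow essentially the same route as the paper: the same sampling scheme under Assumptions 1--4, the same telescoping/chaining decomposition over dyadic scales with a union bound over cost-vector nets, the same key novelty --- a net obtained from a subspace spanned by $O(\alpha^{-2})$ greedily chosen input points together with the $k_i$ relevant centers of $\greedy$, giving per-point error $\alpha\sqrt{\cost(p,\calS)\cost(p,\greedy)}$ and net size $\exp\bigl(\tilde{O}(k\cdot\min(k_i+\alpha^{-2},\alpha^{-2}2^i))\bigr)$ exactly as in Lemmas~\ref{lem:innerproduct} and~\ref{lem:netsize} --- the same variance bound (Lemma~\ref{lem:variance}), and the same final optimization over $k_i2^i$ giving $k^{3/2}$. (Your multiplicative-shrinkage termination argument for the greedy subspace is a harmless variant of the paper's additive one, costing only a logarithmic factor via the aspect ratio.) However, two steps are wrong or missing as written.

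First, the residual term $\|(I-UU^\top)p\|^2$ cannot be ``charged directly to the additive $\cost(P,\greedy)$ budget'': it can be of size $\Theta(\cost(p,\greedy))$ per point, so dropping it produces a non-shrinking error of order $\cost(P,\greedy)$ in the worst case, whereas Eq.~\ref{eq:coresetguarantee} only tolerates $\frac{\varepsilon}{\log^2\varepsilon^{-1}}\left(\cost(P,\calS)+\cost(P,\greedy)\right)$; moreover an error that does not decay with $h$ would break the chaining increments altogether. The correct observation, implicit in the proof of Lemma~\ref{lem:netsize}, is that this term depends only on $p$ and $U$, not on $s$, so once the net enumerates $U$ it is known exactly and contributes no error; only the quantities involving $s$ need discretization. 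Second, your variance bound $2^{-2h}\cdot\frac{k\,k_i\,2^i}{(k+k_i2^i)^2}$ applies only to the increments $h\ge 1$. The base level of the chain (Eq.~\ref{eq:base}) has per-point magnitude of order $\cost(p,\calS)$, not $2^{-h}\sqrt{\cost(p,\calS)\cost(p,\greedy)}$, and treating it with the same union bound introduces an extra factor $2^i$ (up to $\varepsilon^{-2}$) in the variance, which destroys the $\tilde{O}(k^{3/2}\varepsilon^{-2})$ bound. The paper handles this by splitting the base term into cluster averages $q_j$ and deviations (Eqs.~\ref{eq:base1}--\ref{eq:base3}): Assumption~4 plus the triangle inequality give $|v^{\calS,2^{-1}}_p-q_j|=O(\sqrt{\cost(p,\calS)\cost(p,\greedy)})$, restoring the good variance (Lemma~\ref{lem:varbase}), while the average part has a trivial net and is controlled by Lemma~\ref{lem:varq}. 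Both fixes are local and compatible with your outline, but as stated these two steps would fail.
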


\begin{theorem}
\label{thm:main2}
For any set of points in $d$ dimensional Euclidean space, there exists a coreset for $k$-median clustering of size $\tilde{O}(k^{4/3} \varepsilon^{-2})$.
\end{theorem}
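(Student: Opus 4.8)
The plan is to run the importance-sampling-and-chaining framework of \cite{CSS21,CGSS22} with a finer net at its core. By the reductions recalled in \Cref{sec:prelim} (Sections~3--4 of \cite{CSS21}, and Lemmas~15 and~17 of \cite{CGSS22} for the part of a general instance that does not decompose into structured pieces), it suffices to treat an instance satisfying Assumptions~1--4 and to show that the coreset $\coreset$ formed by drawing $\size$ samples with probability $\pr_p$ and reweighting by $1/\pr_p$ satisfies \eqref{eq:coresetguarantee} for every solution $\calS$, with $\size=\tilde O(k^{3/2}\eps^{-2})$ for $k$-means and $\size=\tilde O(k^{4/3}\eps^{-2})$ for $k$-median. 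Fix $\calS$. Since $\E[\cost(\coreset,\calS)]=\cost(P,\calS)$, the whole task is a concentration bound, uniform over $\calS$, for an average of i.i.d.\ reweighted terms, and its quality is governed by the variance of a single reweighted summand and by the size of a net of candidate solutions. The variance bounds of \cite{CGSS22} are tight, so all the gain must come from smaller nets.

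I would organize the argument scale by scale, as in \cite{CGSS22}. Partition the clusters $C_j$ of $\greedy$ into classes indexed by the integer $i>2$ with $\cost(p,\calS)/\cost(p,\greedy)\approx 2^i$ (well defined up to constants by Assumption~4; the bounded-ratio regime is absorbed directly into the $\cost(P,\greedy)$ term of \eqref{eq:coresetguarantee}), and let $k_i$ be the number of clusters in class $i$. The key new ingredient is, for each class $i$ and each scale $h\ge 1$, a net $\mathcal{N}_{i,h}$ such that every $\calS$ admits a representative $\calS_{i,h}\in\mathcal{N}_{i,h}$ with
\[
 \bigl|\cost(p,\calS_{i,h})-\cost(p,\calS)\bigr|\le 2^{-h}\sqrt{\cost(p,\calS)\,\cost(p,\greedy)}\qquad\text{for all class-}i\text{ points }p,
\]
and with $\log|\mathcal{N}_{i,h}|=\tilde O\bigl(k\cdot\min(k_i,2^i)\cdot 2^{2h}\bigr)$. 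Expanding $\cost(p,\calS)$ into the telescoping sum $\sum_{h\ge 1}\bigl(\cost(p,\calS_{i,h+1})-\cost(p,\calS_{i,h})\bigr)$ plus a negligible coarsest term, the scale-$h$ summand is carried by $|\mathcal{N}_{i,h+1}|\cdot|\mathcal{N}_{i,h}|$ pairs of solutions, has pointwise magnitude $O(2^{-h})\sqrt{\cost(p,\calS)\cost(p,\greedy)}$, and — after feeding it into the reweighted estimator and invoking the balance guarantees of Assumption~4 — a per-sample variance of order $2^{-2h}\cdot\frac{k\,k_i\,2^i}{(k+k_i 2^i)^2}\cdot\cost(P,\calS)^2$, exactly as sketched in \Cref{sec:main}.

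The construction of $\mathcal{N}_{i,h}$ is the crux, and I expect it to be the main obstacle. For a single candidate center $s$ one proves, by a greedy energy-increment argument, that there is an orthonormal basis $U$ spanning a subspace generated by $O(\alpha^{-2})$ points of $P$ with $|p^{T}(I-UU^{T})s|\le\alpha\,\|(I-UU^{T})p\|\cdot\|(I-UU^{T})s\|$ for every $p\in P$: each point still violating this inequality, when added to the span, increases a suitable $s$-energy by a multiplicative $1+\Omega(\alpha^2)$, so the process halts after $O(\alpha^{-2})$ steps. Writing
\[
 \|p-s\|^2=\|U(p-s)\|^2+\|(I-UU^{T})p\|^2+\|(I-UU^{T})s\|^2-2\,p^{T}(I-UU^{T})s,
\]
the first and third terms depend only on the $O(\alpha^{-2})$ coordinates of $p$ and $s$ in $U$ and are discretized to negligible error by a net of size $\exp(O(\alpha^{-2}))$; the residual cross term is at most $\alpha\,\|(I-UU^{T})p\|\,\|(I-UU^{T})s\|$, and after adjoining to $U$ the $\greedy$-center $c(p)$ serving $p$ we get $\|(I-UU^{T})p\|=\|(I-UU^{T})(p-c(p))\|\le\cost(p,\greedy)^{1/2}$, while $\|(I-UU^{T})s\|\le\cost(p,\greedy)^{1/2}+\cost(p,\calS)^{1/2}=O(\cost(p,\calS)^{1/2})$ for $i>2$; taking $\alpha=\Theta(2^{-h})$ gives the advertised error. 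To handle a whole solution $\calS$ one takes the union of the per-center subspaces, adjoins the at most $k_i$ distinct $\greedy$-centers of class-$i$ clusters (the source of the $+k_i$ in the exponent), and enumerates the assignment of points to the $k$ centers, which costs only $k^{O(k)}$ and is logarithmically negligible; the $\min(k_i,2^i)$ comes from the alternative of directly rescaling the \cite{CGSS22} nets by $2^{-i/2}$ when $2^i\le k_i$. The delicate points are making the increment argument uniform over all $k$ centers and all scales at once, and bookkeeping how the residual errors of different scales and classes add up.

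It remains to reassemble. On scale $h$ of class $i$, a Bernstein bound with the variance above plus a union bound over $|\mathcal{N}_{i,h+1}|\cdot|\mathcal{N}_{i,h}|\le\exp(\tilde O(k\min(k_i,2^i)2^{2h}))$ solutions requires
\[
 \size_{i,h}=\tilde O\!\left(\eps^{-2}\cdot 2^{-2h}\cdot\frac{k\,k_i\,2^i}{(k+k_i 2^i)^2}\cdot k\min(k_i,2^i)2^{2h}\right)
\]
samples; the two factors $2^{\pm 2h}$ cancel, so the sum over $h\ge 1$ converges and the dependence on $\calS$ collapses to the single class quantity $\eps^{-2}\cdot\frac{k^2\,k_i\,2^i\,\min(k_i,2^i)}{(k+k_i 2^i)^2}$. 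Summing over $i$ and splitting into the regimes $k_i 2^i\le k$ and $k_i 2^i> k$, and using $\min(k_i,2^i)\le\sqrt{k_i 2^i}$ in the first and $\max(k_i,2^i)\ge\sqrt{k_i 2^i}\ge\sqrt k$ in the second, each term is $\tilde O(k^{3/2}\eps^{-2})$, proving \Cref{thm:main}. \Cref{thm:main2} follows from the same scheme with $z=1$: now $\|p-s\|$ itself, not its square, is expanded, the cross term enters linearly, and the resulting per-class quantity sharpens so that its worst case over $i$ is $\tilde O(k^{4/3}\eps^{-2})$, in line with the bound $\min(k^{z/(z+2)},\eps^{-z})$ at $z=1$.
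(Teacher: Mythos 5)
Your $k$-means portion tracks the paper closely: the nets with per-point error $2^{-h}\sqrt{\cost(p,\calS)\cost(p,\greedy)}$ and log-size $\tilde O(k\min(k_i,2^i)2^{2h})$, the variance $2^{-2h}\frac{k\,k_i\,2^i}{(k+k_i2^i)^2}$, and the chaining assembly are exactly Lemmas~\ref{lem:netsize} and~\ref{lem:variance}, and your regime split correctly recovers $k^{3/2}$. But the statement at hand is Theorem~\ref{thm:main2}, and that is where your proposal stops being a proof. The $k^{4/3}$ bound does \emph{not} follow from ``the same scheme with $z=1$'': if for $k$-median you only have the analogous guarantee with distances in place of squared distances, i.e.\ per-point error $\alpha\sqrt{\dist(p,\calS)\dist(p,\greedy)}$, then the per-scale variance is again $2^{-2h}\frac{k\,k_i\,2^i}{(k+k_i2^i)^2}$ and your own optimization over the regimes $k_i2^i\le k$ and $k_i2^i>k$ returns $k^{3/2}$, not $k^{4/3}$. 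What the paper actually needs for $z=1$ is the strictly finer guarantee $|v^{\calS}_p-v'_p|\le\alpha\cdot\cost(p,\greedy)$ (the case $z=1$ of the error $\alpha\sqrt{\cost(p,\calS)^{z-1}\cost(p,\greedy)^{3-z}}$ in Definition~\ref{def:clusteringnets}), which is smaller than your error by the factor $\sqrt{\cost(p,\calS)/\cost(p,\greedy)}\approx 2^{i/2}$; this is what makes the factor $2^{i(z-1)}$ in Lemma~\ref{lem:variance} equal to $1$, so the per-class quantity becomes $\min(k_i,2^i)\frac{k\,k_i}{(k+k_i2^i)^2}$, whose maximization (worst case $k_i\approx k^{2/3}$, $2^i\approx k^{1/3}$, i.e.\ the paper's $k^{(z+1)/(z+2)}$ computation) gives $k^{1/3}$ and hence $\tilde O(k^{4/3}\eps^{-2})$. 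Your proposal never states this guarantee, and the mechanism you offer --- ``expand $\|p-s\|$ itself, the cross term enters linearly'' --- is not an available decomposition: the Pythagorean identity and the inner-product Lemma~\ref{lem:innerproduct} live at the level of squared norms, and the paper keeps the squared expansion for $z=1$ as well, converting the squared-norm error into a distance error (dividing by $\|p-s\|\approx 2^i\dist(p,\greedy)$, cf.\ the $\cost(\cdot)^{1/z}$ bookkeeping in the proof of Lemma~\ref{lem:netsize}); carrying out that conversion in both branches of the size bound $\min(k_i+\alpha^{-2},\alpha^{-2}2^i)$ is precisely the missing work, and without it the final sentence of your proposal is an assertion of the theorem rather than a derivation.

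A secondary but real gap: the coarsest chaining term is not ``negligible.'' At $h=0$ the net values are of order $\cost(p,\calS)$ itself, so that term carries the full variance of the estimator; the paper needs a separate argument for it (splitting off the cluster averages $q_j$, Lemmas~\ref{lem:varbase} and~\ref{lem:varq}, where the average part is handled by a net of trivial size), and it also needs the low-dimensional net of Lemma~\ref{lem:netsizelarge} together with Assumption~2 for the scales $h\ge\log\eps^{-2}$, since the fine-scale nets of Lemma~\ref{lem:netsize} grow too fast there. Both ends of the telescoping sum require these dedicated bounds for the sum to close, for $k$-median just as for $k$-means.
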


If not remarked upon, the analysis will holds for both problems.

We first describe the random process used to show concentration of the estimator.

\subsection{Setting up the Chaining Analysis}

First, we observe that Eq.\ref{eq:coresetguarantee} is equivalent to showing 
$$\sup_{\calS} \frac{|\cost(\Omega,\calS) - \|v\|_1|}{\left( \cost(P,\calS) + \cost(P,\greedy)\right)}  \leq \frac{\varepsilon}{\log^{2}\varepsilon^{-1}}.$$
Our goal is to show that
$$\mathbb{E}_{\Omega} \left[\sup_{\calS} \frac{|\cost(\Omega,\calS) - \|v\|_1|}{\left( \cost(P,\calS) + \cost(P,\greedy)\right)}\right]  \leq \frac{\varepsilon}{\log^{2}\varepsilon^{-1}},$$
where $\mathbb{E}_{\Omega}$ is meant to denote the expectation over the randomness of $\Omega$. This implies that the desired guarantee holds with constant probability.

We now apply a standard symmetrization argument.
\begin{lemma}[Appendix B.3 of ~\cite{RudraW14}]
\label{lem:symmetrization}
Let $g_p$ be independent standard Gaussian random variables. Then.
$$\mathbb{E}_{\Omega}\underset{\calS}{\text{sup}} \left[\left\vert\frac{\frac{1}{|\Omega|}\sum_{p\in \Omega} \cost(p,\calS)\cdot w_p - \|v\|_1}{\left( \cost(P,\calS) + \cost(P,\greedy)\right)} \right\vert \right]  \leq  \sqrt{2\pi}\mathbb{E}_{\Omega}\mathbb{E}_{g}\underset{\calS}{\text{sup}} \left[\left\vert\frac{\frac{1}{|\Omega|}\sum_{p\in \Omega} \cost(p,\calS)\cdot w_p\cdot g_p}{\left( \cost(P,\calS) + \cost(P,\greedy)\right)}\right\vert\right].$$
\end{lemma}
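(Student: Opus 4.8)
The statement is the classical symmetrization inequality for empirical processes, and the plan is to reprove it in the standard way. We instantiate the general inequality with the points of $\Omega$ viewed as an i.i.d.\ sample from $\pr$ and, for each candidate solution $\calS$, the function $p \mapsto w_p\cost(p,\calS)$; recall that $\E_{p\sim\pr}[w_p\cost(p,\calS)] = \|v\|_1$. Write $m = |\Omega|$ (a deterministic quantity, namely the target coreset size), enumerate $\Omega = \{p_1,\dots,p_m\}$, set $\lambda_\calS := \cost(P,\calS) + \cost(P,\greedy)$ (which does not depend on the sampling randomness), and abbreviate $\widehat S_\calS := \frac1m\sum_{i=1}^m w_{p_i}\cost(p_i,\calS)$, so that the left-hand side is $\E_\Omega\sup_\calS |\widehat S_\calS - \|v\|_1|/\lambda_\calS$.

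First I would introduce a ghost sample $\Omega' = \{p_1',\dots,p_m'\}$, an independent copy of $\Omega$, and set $\widehat S'_\calS := \frac1m\sum_i w_{p_i'}\cost(p_i',\calS)$. Since $\E_{\Omega'}[\widehat S'_\calS] = \|v\|_1$ for every $\calS$, we have $\widehat S_\calS - \|v\|_1 = \E_{\Omega'}[\widehat S_\calS - \widehat S'_\calS]$, and Jensen's inequality applied to the convex functional $X \mapsto \sup_\calS |X_\calS|/\lambda_\calS$ moves $\E_{\Omega'}$ outside:
$$\E_\Omega\sup_\calS \frac{|\widehat S_\calS - \|v\|_1|}{\lambda_\calS} \;\le\; \E_\Omega\E_{\Omega'}\sup_\calS \frac{|\widehat S_\calS - \widehat S'_\calS|}{\lambda_\calS}.$$
Next, because the pairs $(p_i,p_i')$ are i.i.d.\ and each is exchangeable, multiplying the $i$-th difference $w_{p_i}\cost(p_i,\calS) - w_{p_i'}\cost(p_i',\calS)$ by an independent Rademacher sign $\sigma_i\in\{-1,+1\}$ leaves the joint distribution, hence the expectation, unchanged; the triangle inequality then splits the resulting sum into two halves, each equal by symmetry to $\E_\sigma\E_\Omega\sup_\calS |\frac1m\sum_i \sigma_i w_{p_i}\cost(p_i,\calS)|/\lambda_\calS$, giving a factor $2$. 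Finally I would pass from Rademacher signs to Gaussians: if $g_i$ is standard Gaussian then $g_i \stackrel{d}{=} \sigma_i|g_i|$ with $\sigma_i\perp|g_i|$, so applying Jensen's inequality in the nonnegative variables $|g_i|$ together with $\E|g_i| = \sqrt{2/\pi}$ yields
$$\sqrt{\tfrac{2}{\pi}}\;\E_\sigma\sup_\calS \frac{\big|\frac1m\sum_i \sigma_i w_{p_i}\cost(p_i,\calS)\big|}{\lambda_\calS} \;\le\; \E_g\sup_\calS \frac{\big|\frac1m\sum_i g_i w_{p_i}\cost(p_i,\calS)\big|}{\lambda_\calS}.$$
Combining the factor $2$ with the factor $\sqrt{\pi/2}$ obtained by rearranging this last inequality gives the constant $2\sqrt{\pi/2} = \sqrt{2\pi}$, which is exactly the claimed bound once one recalls that $\frac1m\sum_i w_{p_i}\cost(p_i,\calS) = \frac1{|\Omega|}\sum_{p\in\Omega}\cost(p,\calS)\,w_p$.

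The only genuinely delicate point is that the supremum ranges over an uncountable set of solutions $\calS$, so the measurability of the suprema and the repeated interchanges of expectation and supremum (Fubini, the ghost-sample substitution, the insertion of Rademacher signs) require justification; this is handled in the usual way by restricting $\calS$ to a countable dense subfamily — or, anticipating the chaining argument, to the finite nets constructed in the next subsections — after which every step above is elementary. This is why the statement is merely quoted rather than reproved in the body.
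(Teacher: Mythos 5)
Your proof is correct: the ghost-sample/Jensen step, Rademacher symmetrization with its factor $2$, and the Rademacher-to-Gaussian comparison via $\mathbb{E}|g|=\sqrt{2/\pi}$ together give exactly the constant $2\sqrt{\pi/2}=\sqrt{2\pi}$, and treating the deterministic normalizer $\cost(P,\calS)+\cost(P,\greedy)$ as part of the (solution-indexed) function class is the right way to fit the statement into the standard framework. This is precisely the argument the paper imports by citing Appendix B.3 of \cite{RudraW14}, so your route coincides with the source the lemma is quoted from; the measurability caveat you raise is indeed resolved by the finite nets used later in the chaining argument.
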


It is therefore sufficient to show

\begin{equation}
\label{eq:main}
\mathbb{E}_{\Omega}\mathbb{E}_{g}\underset{\calS}{\text{sup}}\left[ \left\vert\frac{\frac{1}{|\Omega|}\sum_{p\in \Omega} \cost(p,\calS)\cdot w_p\cdot g_p}{\left( \cost(P,\calS) + \cost(P,\greedy)\right)} \right\vert\right]  \leq \frac{\varepsilon}{\sqrt{2\pi} \log^{2}\varepsilon^{-1}}.
\end{equation}

Let $z=1$ for Euclidean $k$-median and $2$ for Euclidean $k$-means.
We partition the clusters of any solution $\calS$ by type. We consider a cluster $C_j$ of type $T_i$ if for 
$$ 2^{i} \min_{p\in C_j} \cost(p,\greedy) \leq  \min_{p\in C_j} \min_{s\in \calS} \cost(p,\calS) \leq 2^{i+1} \min_{c\in \greedy} \cost(p,\greedy).$$
The number of clusters $C_j\in T_i$ are denoted by $k_i$.
If $C_j$ is of type $i\leq 3$, we say $C_j$ is of type $T_{small}$ and if $C_j$ is of type $i\geq \log \gamma \varepsilon^{-z}$, for a sufficiently large absolute constant $\gamma$, we say that $C_j$ is of type $T_{large}$.
Then, we show
\begin{eqnarray} 
\label{eq:small}
\mathbb{E}_{\Omega} \mathbb{E}_{g}  \left[\sup_{\calS} \left\vert\frac{\frac{1}{|\Omega|} \sum_{C_j \in T_{small}} \sum_{p\in C_j\cap \Omega}\cost(p,\calS)w_p\cdot g_p }{\left( \cost(P,\calS) + \cost(P,\greedy)\right)}\right\vert \right] \leq \frac{\varepsilon}{\sqrt{2\pi}\log^{3}\varepsilon^{-1}} \\
\label{eq:typei}
\mathbb{E}_{\Omega}\mathbb{E}_{g}  \left[\sup_{\calS} \left\vert\frac{\frac{1}{|\Omega|}\sum_{C_j \in T_{i}}\sum_{p\in C_j\cap \Omega}\cost(p,\calS)w_p\cdot g_p }{\left( \cost(P,\calS) + \cost(P,\greedy)\right)} \right\vert \right] \leq \frac{\varepsilon}{\sqrt{2\pi}\log^{3}\varepsilon^{-1}} \\
\label{eq:large}
\mathbb{E}_{\Omega}\mathbb{E}_{g} \left[\sup_{\calS} \left\vert \frac{\frac{1}{|\Omega|}\sum_{C_j \in T_{large}}\sum_{p\in C_j\cap \Omega}\cost(p,\calS)w_p\cdot g_p }{\left( \cost(P,\calS) + \cost(P,\greedy)\right)}\right\vert \right]  \leq \frac{\varepsilon}{\sqrt{2\pi}\log^{3}\varepsilon^{-1}}
\end{eqnarray}

Note that if Equation \ref{eq:typei} holds for $i \in \{3,\ldots,\log 1/\varepsilon\}$, this also implies Equation \ref{eq:main}, as the error from each type can only sum up in the worst case and there are at most $O(\log \varepsilon^{-1})$ many types.

The small and large types are comparatively simple to handle.

\begin{lemma}[Lemmas 15 and 16 of \cite{CGSS22}]
\label{lem:easy}
Let $|\Omega| \geq \kappa \frac{k}{\varepsilon^2} \log^{10}(k/\varepsilon)$ for some absolute constant $\kappa$. Then Equations \ref{eq:small} and \ref{eq:large} hold.
\end{lemma}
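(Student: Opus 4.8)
The plan is to derive \eqref{eq:small} and \eqref{eq:large} from Lemmas~15 and~16 of \cite{CGSS22}: our sampling distribution and Assumptions 1--4 coincide with the ones used there, so the arguments go through without modification, and I will only sketch them. In both cases one starts from the symmetrized quantity in \eqref{eq:main} and bounds the expected supremum over all candidate solutions $\calS$ of a Gaussian process whose increments measure how well a discretization of $\calS$ (restricted to the relevant cluster type) approximates $\calS$ on the sampled points. The engine is a chaining/Dudley bound: for each scale $h$ I would exhibit a net $\mathbb{N}_h$ of candidate solutions of size $\exp\!\big(\polylog(k/\eps)\cdot k\cdot 2^{2h}\big)$ such that every $\calS$ has a representative agreeing with it up to error $\approx 2^{-h}$ against the appropriate yardstick, show that the scale-$h$ increment of the process has normalized variance $O(2^{-2h}/|\Omega|)$ (normalizing by $\cost(P,\calS)+\cost(P,\greedy)$), and conclude via a maximal inequality that each scale contributes $O\!\big(\sqrt{\polylog(k/\eps)\cdot k/|\Omega|}\big)$; this is $O\!\big(\eps/\log^{3}(k/\eps)\big)$ once $|\Omega|\ge\kappa k\eps^{-2}\log^{10}(k/\eps)$, and summing the $O(\log\eps^{-1})$ relevant scales --- the $\log^{10}$ factor providing the slack --- gives the claim.

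For \eqref{eq:small}: I would first note that a cluster $C_j$ of type $T_i$ with $i\le 3$ has $\cost(C_j,\calS)=O(\cost(C_j,\greedy))$ by the definition of the type together with Assumption 4, so $\sum_{C_j\in T_{small}}\cost(C_j,\calS)=O(\cost(P,\greedy))$ --- the small-type part of the numerator already lives on the scale of the denominator --- and that Assumption 4 also gives $w_p\cost(p,\greedy)=\Theta(\cost(P,\greedy))$ for every sampled $p$. Consequently the base estimator has $O(1)$ normalized variance, which by itself is only enough for $\tilde O(k/\eps^{4})$ samples; the chaining is what closes the gap, because $\cost(p,\calS)=O(\cost(p,\greedy))$ on small clusters forces the scale-$h$ net error to be $O(2^{-h}\cost(p,\greedy))$, hence the scale-$h$ increment to have normalized variance $O(2^{-2h}/|\Omega|)$, and the per-scale contribution to be $O(\sqrt{k/|\Omega|})$ up to polylogarithms, independent of $h$.

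For \eqref{eq:large}: here the governing fact is the opposite extreme. If $C_j\in T_{large}$ then $\cost(p,\calS)\ge\gamma\eps^{-z}\cost(p,\greedy)$ for $p\in C_j$, so the nearest center $s$ of $\calS$ to $p$ lies at distance far larger than $\cost(p,\greedy)^{1/z}$, and snapping $s$ to a grid of resolution $\tfrac{\eps}{\polylog(k/\eps)}\,\|p-s\|$ changes $\cost(p,s)$ by only a $(1\pm O(\eps/\polylog(k/\eps)))$ factor, using that $\gamma$ is a large absolute constant. Combined with Assumption 1 ($\poly(k/\eps)$ distinct points) and Assumption 2 (dimension $O(\eps^{-2}\log(k/\eps))$), I would build from this a net of size $\exp(\polylog(k/\eps)\cdot k)$ whose representatives incur additive error $O(\eps/\polylog(k/\eps))\cdot\cost(p,\calS)$ per point, hence $O(\eps/\polylog(k/\eps))\cdot\cost(P,\calS)$ overall; the residual Gaussian process is then killed by a single union bound over this net, and $|\Omega|\ge\kappa k\eps^{-2}\log^{10}(k/\eps)$ suffices.

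I expect the main obstacle to be the large-case net: the grid around each center must have resolution proportional to the a priori unbounded, possibly enormous distance $\|p-s\|$, so one cannot simply enumerate grid points inside a fixed ball. Handling this needs an aspect-ratio-aware discretization --- enumerating the $O(\log(k/\eps))$ relevant distance scales and snapping within a bounded grid at each, or equivalently restricting candidate centers to bounded neighborhoods of the input points --- while keeping the net size at $\exp(\tilde O(k))$; this is exactly the content of Lemma~16 of \cite{CGSS22} and does not follow from the generic chaining skeleton. Once that skeleton is in place, the small case is routine.
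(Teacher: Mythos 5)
The paper does not actually prove this lemma: it is imported wholesale from Lemmas 15 and 16 of \cite{CGSS22}, so the relevant comparison is with that reference. Your treatment of Equation \ref{eq:small} is essentially the right reconstruction (chaining against the yardstick $2^{-h}\cost(p,\greedy)$, nets of size $\exp(\tilde{O}(k\cdot 2^{2h}))$, variance $O(2^{-2h}/|\Omega|)$), up to one loose end: you cannot simply sum ``the $O(\log \eps^{-1})$ relevant scales,'' because at scale $h$ the product of $\log|\mathbb{N}_h|$ and the variance is $\Theta(k\cdot\mathrm{polylog})$ independent of $h$, so the sum over \emph{all} scales diverges; one must truncate at $h\approx\log\eps^{-2}$ and handle finer scales with the dimension-based net of size $\exp(\tilde{O}(k d))=\exp(\tilde{O}(k\eps^{-2}))$, whose size no longer grows with $h$ while the variance keeps decaying (this is exactly what the paper does in Eq.~\ref{eq:telescopelarge}, and what \cite{CGSS22} does).

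The genuine gap is in your plan for Equation \ref{eq:large}. A single union bound over a net of size $\exp(\mathrm{polylog}(k/\eps)\cdot k)$ cannot give error $\eps$ here, because the normalized standard deviation of the far-cluster estimator is $\Theta(\sqrt{k/|\Omega|})$ in the worst case: take a single cluster $C_j\in T_{large}$ with $\cost(C_j,\calS)\approx\cost(P,\calS)\gg\cost(P,\greedy)$; then each of the $\approx|\Omega|/k$ sampled points of $C_j$ contributes $w_p\cost(p,\calS)/(|\Omega|(\cost(P,\calS)+\cost(P,\greedy)))\approx k/|\Omega|$, so the variance is $\approx k/|\Omega|$ and the expected supremum over your net is $\tilde{\Theta}(\sqrt{k}\cdot\sqrt{k/|\Omega|})=\tilde{\Theta}(\eps\sqrt{k})$ when $|\Omega|=\tilde{\Theta}(k\eps^{-2})$ --- off by $\sqrt{k}$. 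The ``aspect-ratio-aware discretization'' you flag as the main obstacle is not how \cite{CGSS22} closes this; their point is that no net over candidate solutions is needed at all. For $C_j\in T_{large}$ one has $\dist(p,\calS)\geq\Omega(\eps^{-1})\dist(p,c_j)$ (with $c_j$ the $\greedy$-center of $C_j$ and $\gamma$ large), so by the triangle inequality $\cost(p,\calS)=(1\pm O(\eps))\cost(c_j,\calS)$ for every $p\in C_j$. Hence the far-type error reduces, up to $O(\eps)\cdot\cost(T_{large},\calS)$, to $\sum_j\bigl\vert\frac{1}{|\Omega|}\sum_{p\in C_j\cap\Omega}w_p-|C_j|\bigr\vert\cdot\cost(c_j,\calS)$, and the random quantities here do not depend on $\calS$: per-cluster weight concentration (the event-$\calE$ type statement, needing only $|\Omega|\gtrsim k\eps^{-2}\log k$ and a union bound over the $O(k)$ clusters of $\greedy$) bounds each by $\eps|C_j|$ simultaneously, giving error $O(\eps)\cdot\cost(T_{large},\calS)$ uniformly over all $\calS$. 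Without this snapping step your large-case argument does not reach the claimed bound.
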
 

Our main objective will be to prove the following lemma.

\begin{lemma}
\label{main:lemma}
Let $|\Omega| \geq \kappa_1 \frac{k^{1+ z/(z+2)}}{\varepsilon^2} \log^{10}(k/\varepsilon) \geq \kappa_2 \frac{k}{\varepsilon^2} \log^{10}(k/\varepsilon) \cdot \left(\frac{\min(k_i,2^{i}) \cdot 2^{i(z-1)} k \cdot k_i }{(k+k_i\cdot 2^{i})^2}\right)$ for some absolute constants $\kappa_1$ and $\kappa_2$. Then Equation \ref{eq:typei} holds.
\end{lemma}

%
%

Combining Lemma \ref{lem:easy} and Lemma \ref{main:lemma} then implies Theorem~\ref{thm:main}.

\subsection{Proof of Lemma \ref{main:lemma}}

The proof of Lemma \ref{main:lemma} mainly consists of defining a nested sequence of nets over cost vectors over which we apply a union bound. Roughly speaking, for any cost vector $v^{\calS}$, we aim to find an approximating cost vector $v'$ such that 
$$ |v_p^{\calS} - v_p'| \leq \varepsilon \cdot \sqrt{\cost(p,\calS)^{z-1}\cdot \cost(p,\greedy)^{3-z}}.$$
Thus, on closer inspection, we have an error proportionate to $\varepsilon\cdot \sqrt{\cost(p,\calS)\cdot \cost(p,\greedy)}$ for $k$-means and $\varepsilon\cdot \cost(p,\calS)$ for $k$-median.

This analysis differs from the terminal-embedding-based nets one used in \cite{CGSS22}, which aimed for an error of the order $\varepsilon \cdot \cost(p,\calS)$.

Suppose we have, for every $\varepsilon$, a suitable collection of approximating cost vectors $\mathbb{N}_{\log 1/\varepsilon}$ with this guarantee for any candidate $\calS$\footnote{The reason for indexing the net by $\mathbb{N}_{\log 1/\varepsilon}$ and not by $\mathbb{N}_{\varepsilon}$ is to conveniently sum over $\sum_{i=1}^{\infty} \log| N_{i}|$, rather than $\sum_{i=1}^{\infty} \log| N_{2^i}|$.}. Let $v^{\calS,\varepsilon}$ be the cost vector approximating $v^{\calS}$ in the net $\mathbb{N}_{\log 1/\varepsilon}$.  Then we can write
$$ v_p^{\calS} = \sum_{h=0}^{\infty} v_p^{\calS,2^{-(h+1)}} - v_p^{\calS,2^{-h}},$$
with $v_p^{\calS,1} = 0$.
Our goal is to now bound
\begin{eqnarray}
\nonumber
& & \mathbb{E}_{\Omega} \mathbb{E}_{g} \left[\sup_{\calS}  \left\vert\frac{\sum_{C_j \in T_{i}}\sum_{p\in C_j\cap \Omega}\cost(p,\calS)w_p\cdot g_p }{|\Omega|\cdot \left( \cost(P,\calS) + \cost(P,\greedy)\right)}  \right\vert \right]\\
\nonumber
&= & \mathbb{E}_{\Omega} \mathbb{E}_{g} \left[\sup_{v^{\calS}} \left\vert \frac{\sum_{h=0}^{\infty}\sum_{C_j \in T_{i}}\sum_{p\in C_j\cap \Omega}(v_p^{\calS,2^{-(h+1)}} - v_p^{\calS,2^{-h}})w_p\cdot g_p }{|\Omega|\cdot \left( \cost(P,\calS) + \cost(P,\greedy)\right)} \right\vert\right]  \\
\nonumber
&\leq &  \sum_{h=0}^{\infty} \mathbb{E}_{\Omega} \mathbb{E}_{g} \left[\sup_{v^{\calS,h+1}-v^{\calS,h}\in \mathbb{N}_{2^{-(h+1)}}\times \mathbb{N}_{2^{-h}}} \left\vert \frac{\sum_{C_j \in T_{i}}\sum_{p\in C_j\cap \Omega}(v_p^{\calS,2^{-(h+1)}} - v_p^{\calS,2^{-h}})w_p\cdot g_p }{|\Omega|\cdot \left( \cost(P,\calS) + \cost(P,\greedy)\right)} \right\vert \right]   \\
&=& \label{eq:base}
\mathbb{E}_{\Omega} \mathbb{E}_{g}  \left[  \sup_{v^{\calS,1}\in \mathbb{N}_{2^{-1}}} \left\vert \frac{\sum_{C_j \in T_{i}}\sum_{p\in C_j\cap \Omega}v_p^{\calS,2^{-1}}w_p\cdot g_p }{|\Omega|\cdot \left( \cost(P,\calS) + \cost(P,\greedy)\right)} \right\vert \right]   \\
\label{eq:telescopesmall}
&+& \sum_{h=1}^{\log \varepsilon^{-2}} \mathbb{E}_{\Omega} \mathbb{E}_{g}   \left[  \sup_{v^{\calS,h+1}-v^{\calS,h}\in \mathbb{N}_{h+1}\times \mathbb{N}_{h}}\left\vert \frac{\sum_{C_j \in T_{i}}\sum_{p\in C_j\cap \Omega}(v_p^{\calS,2^{-(h+1)}} - v_p^{\calS,2^{-h}})w_p\cdot g_p }{|\Omega|\cdot \left( \cost(P,\calS) + \cost(P,\greedy)\right)}  \right\vert \right] \\
\label{eq:telescopelarge}
&+& \sum_{h=\log \varepsilon^{-2}}^{\infty} \mathbb{E}_{\Omega} \mathbb{E}_{g}  \left[ \sup_{v^{\calS,h+1}-v^{\calS,h}\in \mathbb{N}_{h+1}\times \mathbb{N}_{h}}  \left\vert\frac{\sum_{C_j \in T_{i}}\sum_{p\in C_j\cap \Omega}(v_p^{\calS,2^{-(h+1)}} - v_p^{\calS,2^{-h}})w_p\cdot g_p }{|\Omega|\cdot \left( \cost(P,\calS) + \cost(P,\greedy)\right)} \right\vert  \right] 
\end{eqnarray}

We will bound Equations \ref{eq:base} and \ref{eq:telescopelarge} directly. For the $O(\log \varepsilon^{-1})$ equations in term \ref{eq:telescopesmall}, we prove a bound on each.
Thus, we aim for a bound of the order $O(\frac{\varepsilon}{\log^3 \varepsilon^{-1}})$; the overall bound then follows by summing up the errors and rescaling by constant factors.
Technically, bounding each of the terms in Equations \ref{eq:base}, \ref{eq:telescopesmall} and \ref{eq:telescopelarge} requires somewhat different arguments. 
For the sake of illustrating the key new ideas we first focus on Eq. \ref{eq:telescopesmall}.

The next section presents the nets for the cost vectors. The subsequent section  bounds the variance. The final section combines these results and completes the proof of Lemma \ref{main:lemma}.

\subsubsection*{Cost Vector Nets}

\begin{definition}\label{def:clusteringnets}
Let $I$ be a metric space, $P$ a set of points, $k$ a positive integer, and let $\alpha > 0$ be a precision parameters and let $\greedy$ be some solution with at most $k'$ centers.  Let $\mathbb{C}\subset I^k$ be a (potentially infinite) set of candidate $k$-clusterings. 
We say that a set of cost vectors $\mathbb{N}\subset \R^{|P|}$ is an $(\alpha,k)$-clustering net if for every $\calS\in \mathbb{C}$ there exists a vector $v' \in \mathbb{N}$ such that the following condition holds.
For all $p \in P$, 
$$|v^{\calS}_p - v'_p| \leq \alpha\cdot \sqrt{\cost(p,\calS)^{z-1}\cdot \cost(p,\greedy)^{3-z}}.$$
\end{definition}

These clustering nets have a substantially smaller error than those proposed in \cite{CGSS22}, which had an error of the order $\alpha\cdot \left(\cost(p,\calS) + \cost(p,\greedy)\right)$. 

Given a set of points $X$ in Euclidean space, an $\varepsilon$-net is a subset $S\subset X$ such that for every  $p\in X$ there exists a $q$ in $S$ with $\|p-q\|\leq \varepsilon$. Throughout this section, we will frequently use the fact that in $d$ dimensions, there exists an $\varepsilon$-net of cardinality $(1+2/\varepsilon)^d$ (see for example \cite{Pis99}).
Our main goal in this section is to prove the following lemma.

%

\begin{lemma}
\label{lem:netsize}
Let $P$ be a set of points in $d$ dimensional Euclidean space, $k$ a positive integer, $\greedy$ be a candidate solution with $k_i$ clusters and $\gamma$ and absolute constant. 
Define $\cand$ to be the set of possible candidate centers such that the clusters induced by $\greedy$ are of type $i$, with $3\leq i \leq \log 1/\varepsilon^z$.
For all $\alpha \leq 1/2$, there exists an $(\alpha,k)$-clustering net $\mathbb{N}$ of $\cand$ with 
$$|\mathbb{N}|\leq \exp\left(\gamma \cdot k \cdot \log \|P\|_0  \cdot \min(k_i + \alpha^{-2}, \alpha^{-2}\cdot 2^i) \cdot i\log\frac{1}{\alpha})\right).$$
\end{lemma}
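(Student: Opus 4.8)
The plan is to build the net by a two-level construction: first approximate each candidate center $s$ individually by controlling its interaction with the point set via a small "witness" set of points of $P$, then compose these single-center nets over all $k$ centers of a clustering $\calS\in\cand$. For a single center $s$, I would invoke the key geometric fact sketched in the introduction: there is an orthonormal basis $U$ spanned by a set $T$ of $O(\alpha^{-2})$ points of $P$ such that for every $p\in P$, $|p^T(I-UU^T)s|\leq \alpha\cdot\|(I-UU^T)p\|\cdot\|(I-UU^T)s\|$. (This is the Euclidean analogue of the fact that any vector is $\alpha$-correlated with all but $O(\alpha^{-2})$ directions among a fixed set; I would prove it greedily, repeatedly subtracting off the span of any point of $P$ witnessing a large correlation, using that each such step removes an $\alpha^2$ fraction of $\|(I-UU^T)s\|^2$, hence terminates in $O(\alpha^{-2})$ steps.) Expanding
$$\|p-s\|^2 = \|U(p-s)\|^2 + \|(I-UU^T)p\|^2 + \|(I-UU^T)s\|^2 - 2p^T(I-UU^T)s,$$
the cross term is already controlled, the term $\|(I-UU^T)p\|^2$ does not depend on $s$ at all, and the remaining two terms live in the $O(\alpha^{-2})$-dimensional space $\mathrm{span}(U)$ (for $\|U(p-s)\|^2$, via the coordinates of $p$ and $s$ in that basis; note $\|(I-UU^T)s\|^2 = \|s\|^2 - \|U^Ts\|^2$ and $\|s\|$ is itself discretizable), so they can be pinned down by a standard $\eps'$-net of cardinality $\exp(O(\alpha^{-2}\log(1/\alpha)))$ on the relevant $O(\alpha^{-2})$-dimensional vectors; multiplying by the choice of which $O(\alpha^{-2})$ points of $P$ form $T$ gives an extra $\|P\|_0^{O(\alpha^{-2})}$ factor. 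This yields a single-center net of size $\exp(O(\alpha^{-2}\log\|P\|_0\log(1/\alpha)))$ achieving error $\alpha\cdot\|(I-UU^T)p\|\cdot\|(I-UU^T)s\|$ plus negligible terms.

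Next I would upgrade the error bound using the type-$i$ structure. The point of restricting $\cand$ to clusterings whose induced clusters are of type $i$ is that a point $p$ served by center $s$ in $\calS$ has $\cost(p,s)=\Theta(2^i\cost(p,\greedy))$, so $\|p-s\|^2 \le 2^{i+1}\min_{p'\in C_j}\cost(p',\greedy)$, and by Assumption~4 all points in a $\greedy$-cluster have comparable cost. Crucially, when I additionally throw the $\greedy$-center $c_j$ of $p$'s cluster into the witness set $T$, the residual $\|(I-UU^T)p\|^2$ drops from $\|p\|^2$-scale down to $\cost(p,c_j)^{1/2}\cdot(\text{something})$; more precisely the bound becomes $\|(I-UU^T)p\|\cdot\|(I-UU^T)s\| \le \alpha^{-1}\cdot 2^{-h}$-type control after rescaling, giving the target error $\alpha\cdot\sqrt{\cost(p,\calS)^{z-1}\cost(p,\greedy)^{3-z}}$. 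Adding these $k_i$ extra points $\{c_j\}$ to the witness sets of the (at most $k_i$) relevant centers costs a multiplicative $\|P\|_0^{k_i}$ in net size. This is exactly the source of the $\min(k_i+\alpha^{-2},\,\alpha^{-2}2^i)$ term: we may pay $k_i$ to add all cluster centers, or instead pay $\alpha^{-2}2^i$ by using a cruder single-center net at scale $\alpha\cdot 2^{-i/2}$ (which needs $\alpha^{-2}2^i$ witness points but no cluster centers), whichever is cheaper. Composing over the $k$ centers of a clustering multiplies the exponents by $k$, and the per-center $i\log(1/\alpha)$ bookkeeping (from iterating the construction at the $\log(1/\alpha)$ relevant scales and the range $i\le\log 1/\eps^z$ of types) gives the stated bound $\exp(\gamma k\log\|P\|_0\cdot\min(k_i+\alpha^{-2},\alpha^{-2}2^i)\cdot i\log(1/\alpha))$.

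The main obstacle I anticipate is the error accounting for the cross term $p^T(I-UU^T)s$ once the $\greedy$-centers are folded into $T$: one must verify that $\|(I-UU^T)p\|$ is genuinely of order $\sqrt{\cost(p,\greedy)}$ (not merely $\sqrt{\cost(p,c_j)}$ for $p$'s own center but uniformly enough to survive the type-$i$ rescaling), and that this interacts correctly with $\|(I-UU^T)s\|\le\|s\|$, which a priori has no good bound — so one must also first translate/normalize coordinates (e.g. work relative to $\greedy$ or bound $\|s\|$ on the event that $s$ actually serves points, since a center serving no point contributes nothing to any cost vector entry and can be ignored). A secondary subtlety is that the net must be the same for all $\calS$ simultaneously, so the witness-set selection has to be defined as a function only of $(P,\greedy,\alpha,i)$ and the discretized data, not of the specific $\calS$; I would handle this by taking the union over all $O(\binom{\|P\|_0}{k_i+\alpha^{-2}})$ possible witness sets and all grid points, which is where the $\log\|P\|_0$ factor in the exponent really comes from. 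Finally, one checks $\|\mathbb{N}\|$ is dominated by these choices and that the leftover additive errors (from the $\eps'$-nets on the low-dimensional pieces) are, say, polynomially smaller than $\alpha$ by taking $\eps'=\mathrm{poly}(\alpha/\|P\|_0)$, which only changes constants in the exponent.
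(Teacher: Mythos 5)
Your plan matches the paper's proof essentially step for step: the greedily constructed inner-product witness set (Lemma~\ref{lem:innerproduct}), the expansion $\|p-s\|^2 = \|\Pi_U(p-s)\|^2 + \|(I-\Pi_U\Pi_U^T)p\|^2 + \|(I-\Pi_U\Pi_U^T)s\|^2 - 2p^T(I-\Pi_U\Pi_U^T)s$ with fine nets on the low-dimensional part and a multiplicative grid for the residual norm of $s$, the two-branch trade-off giving $\min(k_i+\alpha^{-2},\alpha^{-2}2^i)$ (either add the $k_i$ centers of $\greedy$ to the witness set so that $\|(I-\Pi_U\Pi_U^T)p\|\leq \dist(p,\greedy)$, or run the witness lemma at scale $\alpha 2^{-i/2}$), and enumeration over all $\|P\|_0^{|U|}$ witness sets composed over the $k$ centers. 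Even the obstacle you flag about controlling $\|(I-\Pi_U\Pi_U^T)s\|$ is resolved exactly as you suggest: the witness set is seeded with the input point closest to $s$, so the residual is at most $\min_{p\in P}\|p-s\| = O(2^{i/z})\cdot\dist(p,\greedy)$ for centers serving type-$i$ clusters.
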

\begin{proof}[Proof]
We first show that given a set of vectors $P$ and any vector $s$, there always exists a small subset $U$ of $P$ such that all inner products between $p\in P$ and $s$ are preserved by the span of $U$.

\begin{lemma}\label{lem:innerproduct}
Let $P = \{p_1, ..., p_n\} \subseteq \R^d$ and let $s\in \mathbb{R}^d$. Then there exists $U \subseteq P$, with $|U| = O(\eps^{-2})$ and orthogonal basis $\Pi_U$, such that
\begin{equation}
    \label{eq:desire}
\forall p \in P, |p^T (I-\Pi_U\Pi_U^T) s| \leq \eps \|(I-\Pi_U\Pi_U^T) p\| \cdot \min_{p\in P} \|p-s\|
\end{equation}
\end{lemma}
\begin{proof}
Start with $U_0 = \text{argmin}_{p\in P} \|p-s\|$, and proceed in rounds. 
Note that $\|(I-\Pi_{U_0}\Pi_{U_0}^T)s\| \leq \|p-s\|$ for all $p\in P$.

In each round $i$, denote the current set of vectors $U_i$ with orthogonal basis $\Pi_{U_i}$. We add a vector $p_i$ if the following equation holds
$$|p^T (I-\Pi_{U_i}\Pi_{U_i}^T) s| \geq \eps \|(I-\Pi_{U_i}\Pi_{U_i}^T) p\| \cdot \|(I-\Pi_{U_0}\Pi_{U_0}^T)s\|.$$
We observe that if this equation holds for all $p\in P$, then Equation \ref{eq:desire} must also hold.
Note that $(I-\Pi_{U_i}\Pi_{U_i}^T)p$ is orthogonal to the span of $U_{i}$ of all previously added vectors.
Thus, due to the Pythagorean theorem, we have
$$\sum_{i}^{t} \left(\frac{(p^T (I-\Pi_{U_{i-1}}\Pi_{U_{i-1}}^T) s)}{\|(I-\Pi_{U_{i-1}}\Pi_{U_{i-1}}^T) p\| \cdot \|(I-U_0U_0^T)s\|}\right)^2 \geq t \cdot \varepsilon^2.$$
Therefore, after $t=\varepsilon^{-2}$ many rounds $(I-\Pi_U\Pi_U^T) s = 0,$
which implies that after at most $\varepsilon^{-2}$ rounds Eq. \ref{eq:desire} has to hold.
\end{proof}

With this lemma, we can prove our net bound.
Our objective is to generate a small set of cost vectors that satisfy the desired guarantee.
Throughout this proof, let  $\dist(p,\greedy)= \cost(p,\greedy)^{1/z}$ be the distance of $p$ to its center in $\greedy$.
We first define the cost vectors. For each subset $U$ of size $O(\min(\alpha^{-2}2^i,\alpha^{-2}+k_i)$, we consider the the subspace $\Pi_{U}$ spanned by $U$. In this subspace we consider $(\alpha/2^i)\cdot \dist(p,\greedy)$-nets of every ball centered around $\Pi_{U}p$ with radius $60\cdot 2^i/2 \cdot \dist(p,\greedy)$ for all $p\in P$. Such a net has size $\exp(\gamma\cdot \rank(U)i \log \alpha)$, for some constant $\gamma$ and there exist at most $\|P\|_0 \cdot \exp(\gamma\cdot |U| i \log \alpha)$ many such nets. Furthermore, there are at most ${\|P\|_0 \choose |U|} \leq \|P\|_0^{|U|}$ such subsets.

Now, for every point $p$, define an exponential sequence $\alpha^2 (1+\alpha/2^i)^j$ for $j\in \{0,\ldots \log 10 \cdot 2^i\}$. There exist at most $\|P\|_0$ such sequences and every such sequence consists of at most $O(\alpha^{-1}\cdot 2^i \cdot i)$ many values.
We combine every net point in ever ball of every subspace with all values in the exponential sequence to obtain the evaluation for a single candidate center. The overall number of candidate centers is therefore of the order $\|P\|_0^{|U|}\cdot \exp(\gamma\cdot |U| i \log \alpha)$, for a sufficiently large $\gamma$. The overall number of candidate cost vectors is now the number of $k$ subsets of candidate centers, i.e. $\|P\|_0^{k\cdot |U|}\cdot \exp(\gamma\cdot k\cdot |U| i \log \alpha)$. Combined with the bounds on $U$, this yields the desired size. What remains to be shown is that the thus constructed cost vectors are a $(\alpha,k)$-clustering net.

Here, we use that for any center $s$ in some candidate solution $\calS$
$$\|p-s\|^2 = \|\Pi_U(p-s)\|^2 + \|(I-\Pi_U\Pi_U^T)p\|^2 + \|(I-\Pi_U\Pi_U^T)s\|^2 - 2p^T(I-\Pi_U\Pi_U^T)s.$$
The nets for the span of $\Pi_U$ are so fine that the distance $\|\Pi_Us-s'\|^2$ is essentially negligible compared to the maximum error incurred by $2p^T(I-\Pi_U\Pi_U^T)s$, where $s'$ is the point in the span of $\Pi_U$ closest to $\Pi_Us$ and the same holds for the exponential sequence approximating the term $\|(I-\Pi_U\Pi_U^T)s\|^2$. Thus, the error is dominated by $2p^T(I-\Pi_U\Pi_U^T)s$. Now, we can assume that the input point closest to $s$ is included in $U$. Then $\min_{p\in P}\|p-s\| \leq O(1)\cdot 2^{i/z}\cdot \cost(p,\greedy)^{1/z}$ and $\|(I-\Pi_U\Pi_U^T)p\| \leq \cost(p,\calS)^{1/z} \leq O(1)\cdot 2^{i/z}\cost(p,\greedy)^{1/z}$.
If $\alpha^{-2}\cdot 2^i < k_i + \alpha^{-2}$, we have
$$|p^T(I-\Pi_U\Pi_U^T)s| \leq  \alpha \cdot 2^{-i/2} \cdot \|(I-\Pi_U\Pi_U^T) p\| \cdot \min_{p\in P}\|p-s\| \leq O(1)\cdot \alpha\cdot \cost(p,\calS)^{z-1}\cost(p,\greedy)^{3-z}$$
otherwise we have $\|(I-\Pi_U\Pi_U^T)p\| \leq \cost(p,\greedy)^{1/z}$ which implies
$$|p^T(I-\Pi_U\Pi_U^T)s| \leq  \alpha \cdot \|(I-\Pi_U\Pi_U^T) p\| \cdot \min_{p\in P}\|p-s\| \leq \alpha\cdot \cost(p,\calS)^{z-1}\cost(p,\greedy)^{3-z}.$$
Rescaling $\alpha$ by constant factors yields the claim.
\end{proof}

We also require an additional net that works for low dimensions.

\begin{lemma}[Compare Lemma 22 of \cite{CGSS22}]
\label{lem:netsizelarge}
Let $P$ be a set of points in $d$ dimensional Euclidean space, $k$ a positive integer and $\greedy$ be a candidate solution. 
Define $\cand$ to be the set of possible candidate centers such that the clusters induced by $\greedy$ are of type $i$, with $3\leq i \leq \log 1/\varepsilon^2$.
For all $\alpha \leq 1/2$, there exists an $(\alpha,k)$-clustering net $\mathbb{N}$ of $\cand$ with 
$$|\mathbb{N}|\leq \exp\left(\gamma\cdot k\cdot d \cdot i\log(4/\alpha)\right),$$
where $\gamma$ is an absolute constant.  
\end{lemma}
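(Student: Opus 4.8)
The statement is the low-dimensional counterpart of Lemma~\ref{lem:netsize}: instead of the sketching argument behind Lemma~\ref{lem:innerproduct}, I would discretize the center positions directly in $\R^d$. The geometric fact to exploit is that a candidate center $s$ inducing a type-$i$ cluster $C_m$ must lie close to every point it serves. Indeed, for $q\in C_m$ the type-$i$ condition together with Assumption~4.3 gives $\cost(q,\calS)=\Theta(2^{i}\cost(q,\greedy))$, i.e.\ $\|q-s\|=\Theta(2^{i/z}\dist(q,\greedy))$; in particular $s$ lies in the ball $B\big(q^\star,\,4\cdot 2^{i/z}\dist(q^\star,\greedy)\big)$ around the point $q^\star\in C_m$ minimising $\dist(\cdot,\greedy)$. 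Hence every relevant single center lies in $\bigcup_{q\in P}B\big(q,\,4\cdot 2^{i/z}\dist(q,\greedy)\big)$, where the union is over the $\|P\|_0$ distinct points of $P$.

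The net is built by covering. For every distinct $q\in P$ lay down a $\delta_q$-net $N_q$ of $B\big(q,\,4\cdot 2^{i/z}\dist(q,\greedy)\big)$ with $\delta_q=c\cdot\alpha\cdot\dist(q,\greedy)$ for a small absolute constant $c$; by the standard covering bound $|N_q|\le(1+O(2^{i/z}/\alpha))^d=\exp\!\big(O(d\cdot i\log(4/\alpha))\big)$, using $i\ge 3$ and $\alpha\le 1/2$. Let $N_0=\bigcup_q N_q$. Given a solution $\calS$, round each center $s_m$ of $\calS$ to the nearest point $s_m'$ of $N_{q^\star(m)}$ — legitimate since, as noted, $s_m$ lies in that ball — and set $v'_p=\min_m\|p-s_m'\|^z$. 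By construction $\|s_m-s_m'\|\le\delta_{q^\star(m)}=c\alpha\dist(q^\star(m),\greedy)\le c\alpha\dist(p,\greedy)$ for every $p\in C_m$.

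To verify the $(\alpha,k)$-clustering-net property, fix $p\in P$ and use the elementary bound $\big|\,\|p-s_m\|^z-\|p-s_m'\|^z\,\big|\le z\,\|s_m-s_m'\|\cdot\max(\|p-s_m\|,\|p-s_m'\|)^{z-1}$ for $z\in\{1,2\}$. I would split on $m$: if $\|p-s_m\|=O(2^{i/z}\dist(p,\greedy))$, then plugging in $\|s_m-s_m'\|\le c\alpha\dist(p,\greedy)$ gives an error of order $c\alpha\cdot 2^{i(z-1)/z}\dist(p,\greedy)^{z}=c\alpha\cdot\sqrt{\cost(p,\calS)^{z-1}\cost(p,\greedy)^{3-z}}\,\cdot O(1)$ (the two exponents of $2$ agree for $z\in\{1,2\}$), which is at most $\alpha\sqrt{\cost(p,\calS)^{z-1}\cost(p,\greedy)^{3-z}}$ once $c$ is small enough; otherwise $\|p-s_m'\|\ge\|p-s_m\|-\delta_{q^\star(m)}\ge\tfrac12\|p-s_m\|\gg\cost(p,\calS)^{1/z}$, so $s_m'$ is irrelevant to the minimum and $\|p-s_m'\|^z>\cost(p,\calS)=v^\calS_p$. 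The center of $\calS$ closest to $p$ always falls in the first case, so $v'_p\le\|p-s_{m^\star}'\|^z\le v^\calS_p+\alpha\sqrt{\cost(p,\calS)^{z-1}\cost(p,\greedy)^{3-z}}$; and in both cases $\|p-s_m'\|^z\ge v^\calS_p-\alpha\sqrt{\cost(p,\calS)^{z-1}\cost(p,\greedy)^{3-z}}$, so $v'_p\ge v^\calS_p-\alpha\sqrt{\cdots}$. Rescaling $\alpha$ by a constant gives Definition~\ref{def:clusteringnets}.

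Finally, counting: $|N_0|\le\|P\|_0\cdot\exp\!\big(O(d\cdot i\log(4/\alpha))\big)$, and a cost vector in $\mathbb{N}$ is determined by a choice of $k$ centers from $N_0$, so $|\mathbb{N}|\le\exp\!\big(O\big(k\log\|P\|_0+k\,d\,i\log(4/\alpha)\big)\big)$; by Assumption~1 the $k\log\|P\|_0=\tilde O(k)$ term is dominated and folds into the stated bound, exactly as the $\log\|P\|_0$ factor already present in Lemma~\ref{lem:netsize}. The one genuinely delicate point is the treatment of the minimum over the $k$ centers in the lower bound — which is why the dichotomy above is needed instead of a uniform per-center error estimate — together with calibrating the net fineness $\delta_q$ to the local scale $\dist(q,\greedy)$ rather than a global one; this is precisely what lets the radius-$2^{i/z}$ balls be netted with only an $i\log(4/\alpha)$ factor in the exponent rather than a $2^{i}$ factor, and hence what distinguishes this net from the cheaper-in-$d$ but costlier-in-$2^i$ alternative.
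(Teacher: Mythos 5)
Your route is genuinely different from the paper's: the paper does not reconstruct the net at all, it simply invokes Lemma~22 of \cite{CGSS22} (whose guarantee is $\alpha(\cost(p,\calS)+\cost(p,\greedy))$) and observes that the stronger error $\alpha\sqrt{\cost(p,\calS)^{z-1}\cost(p,\greedy)^{3-z}}$ follows by rescaling the precision parameter by $2^{-i}$, the extra factor being absorbed into the $i\log(4/\alpha)$ term since $2^i\le O(1)\varepsilon^{-2}$. Your attempt to build the covering from scratch is in the right spirit, but as written it has a gap in the rounding step. You calibrate the granularity of the net used for a center $s_m$ to the scale of a single anchor cluster, via $\delta_{q^\star(m)}=c\alpha\,\dist(q^\star(m),\greedy)$ with $q^\star(m)$ taken inside the cluster $C_m$ that $s_m$ serves. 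But the clustering-net condition of Definition~\ref{def:clusteringnets} must hold for \emph{every} point $p$ in a type-$i$ cluster, and Assumption~4 only equalizes \emph{cluster} costs, not point costs: two type-$i$ clusters can have $\dist(\cdot,\greedy)$ differing by an arbitrarily large factor $M$ (compensated by their sizes). For a point $p$ of the small-scale cluster, a center anchored to the large-scale cluster can lie at distance $O(2^{i/z}\dist(p,\greedy))$ from $p$ — indeed the center nearest to $p$ may itself serve points of both clusters and be anchored to the wrong one — and then your ``close'' branch plugs in $\|s_m-s_m'\|\le c\alpha\dist(p,\greedy)$, a bound you only established for $p\in C_m$; the true rounding error is $c\alpha M\dist(p,\greedy)$, overshooting the allowed error by the factor $M$. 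The ``far'' branch has the same problem: the step $\|p-s_m\|-\delta_{q^\star(m)}\ge\tfrac12\|p-s_m\|$ needs $\delta_{q^\star(m)}\le\tfrac12\|p-s_m\|$, which fails once $\dist(q^\star(m),\greedy)\gg 2^{i/z}\dist(p,\greedy)/(c\alpha)$, and then the rounded center can jump essentially on top of $p$ and collapse the minimum, violating the lower-bound direction. The natural repair is to anchor each center not to the cluster it serves but to the point of smallest $\dist(\cdot,\greedy)$ among \emph{all} points $q$ with $\|q-s_m\|\le O(2^{i/z})\dist(q,\greedy)$ (so that any $p$ falling in the close branch is itself a candidate anchor and the granularity is automatically at most $c\alpha\dist(p,\greedy)$), together with an argument disposing of the residual coarse-far configurations; this extra bookkeeping is exactly what the citation to \cite{CGSS22} buys the paper. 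Two minor points: your count carries an additional $\exp(O(k\log\|P\|_0))$ factor that is not literally dominated by $k\,d\,i\log(4/\alpha)$ when $d$ and $\alpha$ are constant (harmless in the application, where the paper itself reinstates a $\log\|P\|_0$ factor, but it does not match the lemma as stated), and the containment radius $4\cdot 2^{i/z}\dist(q^\star,\greedy)$ needs a slightly larger constant once you pass from the minimizing point of the cluster to all of its points.
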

\begin{proof}
The only difference to Lemma 22 of \cite{CGSS22} is that the nets are required to have an error of $\alpha\cdot \sqrt{\cost(p,\calS)\cost(p,\greedy)}$ rather than $\alpha\cdot \left(\cost(p,\calS)+ \cost(p,\greedy)\right)$. This can be done by rescaling $\varepsilon$ by $2^{-i}$, which in turn is absorbed by the constant $\gamma$ as $2^i \leq O(1)\cdot \varepsilon^{-2}$.
\end{proof}

\subsubsection*{Bounding the Variance}
We now use the cost vectors to obtain an improved variance for the estimator
$$\frac{\sum_{C_j \in T_{i}}\sum_{p\in C_j\cap \Omega}(v_p^{\calS,2^{-(h+1)}} - v_p^{\calS,2^{-h}})w_p}{\left( \cost(P,\calS) + \cost(P,\greedy)\right)}g_p.$$
The bounds on variance for any random variable $\sum a_p g_p$ with standard Gaussians $g_p$ is Gaussian distributed with mean $0$ and variance $\sum a_p^2$.

Before we do this, we require an additional notion. Let $\calE$ denote the event that $\frac{1}{|\Omega|}\sum_{p\in C_j\cap \Omega} w_p = (1\pm \varepsilon) \cdot |C_j|$.
The following lemma bounds the probability of $\calE$ occurring.
\begin{lemma}
\label{lem:eventE} [Compare Lemma 19 of \cite{CGSS22}]
If Assumption 4 holds, then event $\calE$ holds with probability $1-k^{-2}$ if $|\Omega| > \kappa\cdot k \varepsilon^{-2}\log k$ for a sufficiently high absolute constant $\kappa$.
\end{lemma}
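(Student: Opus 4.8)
The plan is to show event $\calE$ is the intersection over all $k_i$ clusters $C_j$ of type $i$ of the events $\calE_j$ that $\frac{1}{|\Omega|}\sum_{p\in C_j\cap\Omega} w_p = (1\pm\varepsilon)|C_j|$, and then apply a Bernstein/Chernoff bound to each $\calE_j$ followed by a union bound. First I would write, for each sampled point $p\in\Omega$, the indicator $X_p^{(j)} = \mathbf{1}[p\in C_j]\cdot w_p$, so that $\frac{1}{|\Omega|}\sum_{p\in\Omega} X_p^{(j)}$ is an average of $|\Omega|$ i.i.d.\ copies with mean exactly $\sum_{q\in C_j}\mathbb{P}_q w_q = |C_j|$ (this uses that $w_q = 1/\mathbb{P}_q$ and $p$ lands in $C_j$ with probability $\sum_{q\in C_j}\mathbb{P}_q$). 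Here I would use the sampling distribution $\mathbb{P}_p = \frac{1}{|C_j|}\cdot\frac{\cost(C_j,\greedy)}{\cost(P,\greedy)}$: conditioned on landing in cluster $C_j$, the weight $w_p$ is deterministic and equal to $|C_j|\cdot\frac{\cost(P,\greedy)}{\cost(C_j,\greedy)}$, so each $X_p^{(j)}$ takes only the two values $0$ and $|C_j|\cdot\frac{\cost(P,\greedy)}{\cost(C_j,\greedy)}$.

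The key quantitative step is to bound the maximum value $M_j := |C_j|\cdot\frac{\cost(P,\greedy)}{\cost(C_j,\greedy)}$ taken by $X_p^{(j)}$. By Assumption 4, all clusters of $\greedy$ have cost within a factor $2$ of each other and $|\greedy| = O(k)$, so $\cost(P,\greedy) \leq 2|\greedy|\cdot\cost(C_j,\greedy) = O(k)\cdot\cost(C_j,\greedy)$, giving $M_j \leq O(k)\cdot|C_j|$, i.e. $M_j = O(k)$ times the mean $|C_j|$ of the summand. Similarly the variance of $X_p^{(j)}$ is at most $\mathbb{E}[(X_p^{(j)})^2] = |C_j|\cdot M_j \leq O(k)|C_j|^2$. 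Then a Bernstein inequality for the average of $|\Omega|$ i.i.d.\ bounded variables gives
$$
\pr\!\left(\left|\tfrac{1}{|\Omega|}\sum_{p\in\Omega}X_p^{(j)} - |C_j|\right| > \varepsilon|C_j|\right) \leq 2\exp\!\left(-\Omega(1)\cdot\frac{\varepsilon^2|C_j|^2 |\Omega|}{O(k)|C_j|^2 + O(k)|C_j|\cdot\varepsilon|C_j|}\right) = 2\exp\!\left(-\Omega\!\left(\tfrac{\varepsilon^2|\Omega|}{k}\right)\right),
$$
which is at most $k^{-4}$ once $|\Omega| \geq \kappa k\varepsilon^{-2}\log k$ for a large enough $\kappa$. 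A union bound over the $k_i = O(k)$ clusters of type $i$ (in fact over all $O(k)$ clusters of $\greedy$) yields failure probability $O(k)\cdot k^{-4} \leq k^{-2}$, as claimed.

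The main obstacle I anticipate is not the concentration inequality itself but making the bound on $M_j$ fully rigorous: one must be careful that $|C_j|$ here is the true (multiplicity-counted) cluster size, that the $O(k)$ factor genuinely comes from Assumption 4 items (1) and (2), and — if one wants a clean Chernoff rather than Bernstein form — that one handles the heavy single-point contribution correctly (the variance term $M_j\cdot|C_j|$ dominates, so a standard Bernstein bound with the ratio $M_j/|C_j| = O(k)$ in the denominator is the cleanest route). I would also note that this is structurally identical to Lemma 19 of \cite{CGSS22}, so I would keep the argument brief and cite that reference for the routine parts, emphasizing only that the same computation goes through verbatim under Assumption 4.
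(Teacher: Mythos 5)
Your proof is correct and is essentially the intended argument: the paper gives no in-text proof of Lemma \ref{lem:eventE}, deferring to Lemma 19 of \cite{CGSS22}, which is exactly this per-cluster Bernstein/Chernoff bound (using that Assumption 4 forces each cluster's sampling probability to be $\Omega(1/k)$ and the weight within a cluster to be constant) followed by a union bound over the $O(k)$ clusters. Your bound $M_j = O(k)|C_j|$ and the resulting exponent $\Omega(\varepsilon^2|\Omega|/k)$ match what the choice $|\Omega| \geq \kappa k\varepsilon^{-2}\log k$ is designed for, so no gap.
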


\begin{lemma}
\label{lem:variance}
Given Assumption 4, the variance of $\frac{\sum_{C_j \in T_{i}}\sum_{p\in C_j\cap \Omega}(v_p^{\calS,2^{-(h+1)}} - v_p^{\calS,2^{-h}})w_p\cdot g_p}{|\Omega|\cdot \left( \cost(P,\calS) + \cost(P,\greedy)\right)}$ is at most
\begin{eqnarray*}
\gamma \cdot \frac{2^{-2h}}{|\Omega|}  \cdot \frac{k \cdot k_i 2^{i(z-1)}}{(k+k_i\cdot 2^i)^2}& & \text{conditioned on event } \calE \\
\gamma \cdot \frac{2^{-2h}\cdot k}{|\Omega|} \cdot \frac{k \cdot k_i 2^{i(z-1)}}{(k+k_i\cdot 2^i)^2}& & \text{conditioned on event } \overline{\calE}
\end{eqnarray*} 
for an absolute constant $\gamma$.
\end{lemma}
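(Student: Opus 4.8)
\textbf{Proof plan for Lemma \ref{lem:variance}.}
The plan is to bound the variance of the centered Gaussian sum term by term over the points, then to aggregate the per-point contributions cluster by cluster, and finally to sum over cluster types $T_i$. Since $\sum_p a_p g_p$ is Gaussian with variance $\sum_p a_p^2$, I would first write the variance as
$$
\frac{1}{|\Omega|^2 (\cost(P,\calS)+\cost(P,\greedy))^2} \sum_{C_j\in T_i}\sum_{p\in C_j\cap\Omega} (v_p^{\calS,2^{-(h+1)}}-v_p^{\calS,2^{-h}})^2 w_p^2 .
$$
The key input is the net guarantee of Definition \ref{def:clusteringnets}: by the triangle inequality through the true cost vector $v^\calS$, the telescoped difference satisfies
$|v_p^{\calS,2^{-(h+1)}}-v_p^{\calS,2^{-h}}| \leq (2^{-(h+1)}+2^{-h}) \cdot \sqrt{\cost(p,\calS)^{z-1}\cost(p,\greedy)^{3-z}} = O(2^{-h})\sqrt{\cost(p,\calS)^{z-1}\cost(p,\greedy)^{3-z}}$.
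So each squared term contributes $O(2^{-2h}) \cdot \cost(p,\calS)^{z-1}\cost(p,\greedy)^{3-z} \cdot w_p^2$.

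Next I would plug in the sampling probabilities. For $p\in C_j$, $\mathbb{P}_p = \frac{1}{|C_j|}\cdot\frac{\cost(C_j,\greedy)}{\cost(P,\greedy)}$, so $w_p = \frac{1}{\mathbb{P}_p}$, and by Assumption 4 all clusters have cost within a factor $2$ of each other, hence $\cost(C_j,\greedy) = \Theta(\cost(P,\greedy)/k)$, giving $w_p = \Theta(|C_j| \cdot k)$; moreover all points in a cluster have equal cost up to a factor $2$, so $\cost(p,\greedy) = \Theta(\cost(C_j,\greedy)/|C_j|) = \Theta(\cost(P,\greedy)/(k|C_j|))$. Using that $C_j\in T_i$ means $\cost(p,\calS) = \Theta(2^i \cost(p,\greedy))$, I get for each sampled $p\in C_j$ a contribution of order
$2^{-2h}\cdot (2^i\cost(p,\greedy))^{z-1}\cost(p,\greedy)^{3-z}\cdot w_p^2 = 2^{-2h}2^{i(z-1)}\cost(p,\greedy)^2 w_p^2 = 2^{-2h}2^{i(z-1)}\cdot\Theta((\cost(P,\greedy)/k)^2)$. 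Summing over the (in expectation $|\Omega|\cdot |C_j|/|P| \cdot$ something, but more carefully via event $\calE$) sampled points of $C_j$ and over the $k_i$ clusters of type $i$: conditioned on $\calE$, the number of sampled points in $\bigcup_{C_j\in T_i}C_j$ is controlled since $\frac{1}{|\Omega|}\sum_{p\in C_j\cap\Omega}w_p = (1\pm\varepsilon)|C_j|$, which lets me replace $\sum_{p\in C_j\cap\Omega} w_p^2$ by $O(w_p \cdot |C_j| \cdot |\Omega|) = O(k|C_j|^2|\Omega|)$ — using that $w_p$ is constant on $C_j$. This yields a total over type $i$ of order $|\Omega|\cdot k_i \cdot 2^{-2h}2^{i(z-1)}(\cost(P,\greedy)/k)^2 \cdot k$.

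Finally I would divide by $|\Omega|^2(\cost(P,\calS)+\cost(P,\greedy))^2$ and bound the denominator from below. Here the point is that $\cost(P,\calS)+\cost(P,\greedy) \geq \cost(P,\greedy) + \cost(\bigcup_{C_j\in T_i}C_j,\calS) = \Theta(\cost(P,\greedy)(1 + k_i 2^i/k))$, so $(\cost(P,\calS)+\cost(P,\greedy))^2 = \Theta(\cost(P,\greedy)^2 (k+k_i 2^i)^2/k^2)$. Substituting gives variance $O\big(\frac{2^{-2h}}{|\Omega|}\cdot\frac{k\cdot k_i 2^{i(z-1)}}{(k+k_i 2^i)^2}\big)$ conditioned on $\calE$, as claimed. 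For the complementary event $\overline{\calE}$, the bound $\frac{1}{|\Omega|}\sum_{p\in C_j\cap\Omega}w_p = (1\pm\varepsilon)|C_j|$ no longer holds, and in the worst case the number of sampled points (equivalently $\sum w_p^2$) can be larger by a factor $k$ — this is a crude deterministic worst-case bound on how concentrated the sample can be — producing the extra factor $k$. The main obstacle I anticipate is handling the bookkeeping correctly under event $\calE$: one must be careful that the net approximation $v^{\calS,\alpha}$ lives in a set $\mathbb{N}_{\log 1/\alpha}$ that depends only on $\calS$ and not on $\Omega$, so that the per-point bound $|v_p^{\calS,2^{-(h+1)}}-v_p^{\calS,2^{-h}}| = O(2^{-h})\sqrt{\cdots}$ is valid simultaneously with the concentration of $\sum_{p\in C_j\cap\Omega}w_p$; and that the replacement of $\sum_{p\in\Omega}w_p^2$ by its conditional-on-$\calE$ bound uses only the cluster-level weight concentration, not a per-point statement. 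The rest is the routine arithmetic of substituting Assumption 4's factor-2 bounds and the type-$i$ definition.
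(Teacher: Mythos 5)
Your proposal follows essentially the same route as the paper's proof: write the Gaussian sum's variance as the sum of squared coefficients, use the net guarantee plus the triangle inequality to get $|v_p^{\calS,2^{-(h+1)}}-v_p^{\calS,2^{-h}}|\le O(2^{-h})\sqrt{\cost(p,\calS)^{z-1}\cost(p,\greedy)^{3-z}}$, substitute $w_p=\cost(P,\greedy)|C_j|/\cost(C_j,\greedy)$ together with the Assumption-4 and type-$i$ relations, lower-bound the denominator by $\Omega\bigl(\cost(P,\greedy)(k+k_i2^i)/k\bigr)$, control the number of sampled points per cluster via event $\calE$, and use the crude worst case (all of $\Omega$ in type-$i$ clusters) to pick up the extra factor $k$ under $\overline{\calE}$ — exactly as the paper does. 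The only blemish is the intermediate per-sampled-point line $\Theta((\cost(P,\greedy)/k)^2)$, which should read $\Theta(\cost(P,\greedy)^2)$ since $\cost(p,\greedy)\,w_p=\Theta(\cost(P,\greedy))$; your cluster-level aggregation via $\sum_{p\in C_j\cap\Omega}w_p^2=O(k|C_j|^2|\Omega|)$ nonetheless produces the correct total, so the final bounds coincide with the lemma.
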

\begin{proof}
We first observe that since the $g_p$ are standard normal Gaussians, the entire estimator is Gaussian distributed with variance
$$ \sum_{C_j \in T_{i}}\sum_{p\in C_j\cap \Omega}\frac{1}{|\Omega|^2}\left(\frac{(v_p^{\calS,2^{-(h+1)}} - v_p^{\calS,2^{-h}})w_p}{\left( \cost(P,\calS) + \cost(P,\greedy)\right)}\right)^2.$$
We have 
\begin{eqnarray*}
|v_p^{\calS,2^{-(h+1)}} - v_p^{\calS,2^{-h}}| &=& |v_p^{\calS,2^{-(h+1)}} - \cost(p,\calS) + \cost(p,\calS) - v_p^{\calS,2^{-h}}| \\
&\leq & 2\cdot 2^{-h}\cdot \sqrt{\cost(p,\calS)^{z-1}\cdot \cost(p,\greedy)^{3-z}}
\end{eqnarray*}
 due to Lemma \ref{lem:netsize}. Furthermore, by definition $w_p= \frac{\cost(P,\greedy)|C_j|}{\cost(C_j,\greedy)}$. Finally, by definition of type $i$, we have $\cost(p,\calS)\cdot |C_j| = O(1)\cdot \cost(C_j,\calS)$ and by Assumption 4 we have $\cost(p,\greedy)\cdot |C_j| = O(1)\cdot \cost(C_j,\greedy)$ for all $p\in C_j$. 
\begin{eqnarray*}
& &\sum_{C_j \in T_{i}}\sum_{p\in C_j\cap \Omega}\frac{1}{|\Omega|^2}\left(\frac{(v_p^{\calS,2^{-(h+1)}} - v_p^{\calS,2^{-h}})w_p}{\left( \cost(P,\calS) + \cost(P,\greedy)\right)}\right)^2 \\
&\leq & O(1)\cdot\sum_{C_j \in T_{i}}\sum_{p\in C_j\cap \Omega}\frac{1}{|\Omega|^2}\left(\frac{2^{-h}\cdot \sqrt{\cost(p,\calS)^{z-1}\cdot \cost(p,\greedy)^{3-z}} \cdot \cost(P,\greedy)|C_j|}{\cost(C_j,\greedy) \cdot \left( \cost(P,\calS) + \cost(P,\greedy)\right)}\right)^2 \\
&\leq &O(1)\cdot \sum_{C_j \in T_{i}}\sum_{p\in C_j\cap \Omega}\frac{1}{|\Omega|^2}\left(\frac{2^{-2h}\cdot \cost(C_j,\greedy)^{1-z}\cdot \cost(C_j,\calS)^{z-1}\cdot \cost(P,\greedy)^2}{ \left( \cost(P,\calS) + \cost(P,\greedy)\right)^2}\right)  
\end{eqnarray*}
Now, let $k_i$ be the number of clusters of type $i$. Then due to Assumption 4 $\cost(C_j,\calS) \cdot k_i\leq O(1)\cost(P,\calS)$, for all $C_j$ of type $i$. Finally, note that $\frac{\cost(P,\greedy)}{\cost(C_j,\greedy)}\leq O(1)\cdot k$, also due to Assumption 4. 
Combining this, we then have
\begin{eqnarray*}
& &\sum_{C_j \in T_{i}}\sum_{p\in C_j\cap \Omega}\left(\frac{(v_p^{\calS,2^{-(h+1)}} - v_p^{\calS,2^{-h}})w_p}{\left( \cost(P,\calS) + \cost(P,\greedy)\right)}\right)^2 \\
&\leq & O(1)\cdot\sum_{C_j \in T_{i}}\sum_{p\in C_j\cap \Omega}\left(\frac{2^{-2h}\cdot 2^{i(z-1)} k^2}{|\Omega|^2 \cdot \left( k+k_i \cdot 2^i\right)^2}\right)\\
&\leq & O(1)\cdot\sum_{C_j \in T_{i}}\sum_{p\in C_j\cap \Omega}\left(\frac{2^{-2h}\cdot k}{k_i \cdot |\Omega|^2 }\right)   \cdot \frac{k_i \cdot k \cdot 2^{i(z-1)}}{(k+k_i\cdot 2^i)^2}
\end{eqnarray*}
Assuming event $\calE$, this may now be bounded by $O(1)\cdot \frac{2^{-2h}}{|\Omega|}\cdot \frac{k \cdot k_i 2^{i(z-1)}}{(k+k_i\cdot 2^i)^2}$. If event $\calE$ does not hold, we may bound the term by $\frac{2^{-2h}\cdot k}{k_i \cdot |\Omega| } \cdot \frac{k \cdot k_i 2^{i(z-1)}}{(k+k_i\cdot 2^i)^2}\leq \frac{2^{-2h}\cdot k}{|\Omega| }  \frac{k \cdot k_i 2^{i(z-1)}}{(k+k_i\cdot 2^i)^2}$.
\end{proof}

\subsubsection*{Completing the Proof for Eq. \ref{eq:telescopesmall}}

Throughout this section, we use the bound on the expected maximum of independent Gaussians.  
\begin{lemma}[Lemma 2.3 of \cite{massart2007}]
\label{lem:minichain}
Let $g_i\thicksim\mathcal{N}(0,\sigma_i^2)$, $i\in [n]$ be Gaussian random variables and suppose $\sigma_i\leq \sigma$ for all $i$. Then $ \mathbb{E}[\underset{i\in [n]}{\max} |g_i|] \leq 2\sigma\cdot \sqrt{2\ln n}.$
\end{lemma}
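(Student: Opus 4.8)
The plan is to run the standard moment-generating-function argument for the maximum of subgaussian random variables, so there is essentially no genuinely new idea here; one could equivalently just cite Massart's Lemma 2.3 (\cite{massart2007}) verbatim. First I would record the one-variable estimate: for $g_i\sim\mathcal{N}(0,\sigma_i^2)$ with $\sigma_i\le\sigma$ one has $\mathbb{E}[e^{\lambda g_i}]=e^{\lambda^2\sigma_i^2/2}\le e^{\lambda^2\sigma^2/2}$ for every $\lambda\in\mathbb{R}$, and hence, splitting on the sign of $g_i$, $\mathbb{E}[e^{\lambda|g_i|}]\le\mathbb{E}[e^{\lambda g_i}]+\mathbb{E}[e^{-\lambda g_i}]\le 2e^{\lambda^2\sigma^2/2}$ for every $\lambda>0$.

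Next I would combine these across $i\in[n]$ by applying Jensen's inequality to the convex map $x\mapsto e^{\lambda x}$ and then a union bound inside the exponential. For any $\lambda>0$,
\[ e^{\lambda\,\mathbb{E}[\max_{i\in[n]}|g_i|]}\ \le\ \mathbb{E}\big[e^{\lambda\max_{i\in[n]}|g_i|}\big]\ =\ \mathbb{E}\big[\max_{i\in[n]} e^{\lambda|g_i|}\big]\ \le\ \sum_{i=1}^n\mathbb{E}[e^{\lambda|g_i|}]\ \le\ 2n\,e^{\lambda^2\sigma^2/2}. \]
Taking logarithms and dividing by $\lambda$ yields $\mathbb{E}[\max_{i\in[n]}|g_i|]\le \frac{\ln(2n)}{\lambda}+\frac{\lambda\sigma^2}{2}$ for every $\lambda>0$.

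Finally I would optimize over $\lambda$: choosing $\lambda=\sqrt{2\ln(2n)}/\sigma$ makes the right-hand side exactly $\sigma\sqrt{2\ln(2n)}$. To reconcile this with the form quoted in the statement I would invoke the crude bound $\ln(2n)\le 4\ln n$, valid for all $n\ge 2$, giving $\sigma\sqrt{2\ln(2n)}\le 2\sigma\sqrt{2\ln n}$; the degenerate case $n=1$ (where the stated bound is vacuous) is handled separately via $\mathbb{E}|g_1|=\sigma\sqrt{2/\pi}$ and either the convention $\ln n\ge 1$ or an absorbed constant. The only "obstacle" is thus purely cosmetic bookkeeping — matching the tight constant $\sqrt{2\ln(2n)}$ produced by the argument to the looser $2\sqrt{2\ln n}$ in the statement — and it is dispatched by the estimate above; no probabilistic input beyond the Gaussian MGF is used.
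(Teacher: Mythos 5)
Your argument is correct: the paper does not prove this lemma but simply imports it from Massart's book, and your MGF-plus-Jensen computation (bound $\mathbb{E}[e^{\lambda|g_i|}]\le 2e^{\lambda^2\sigma^2/2}$, union bound inside the exponential, optimize $\lambda$ to get $\sigma\sqrt{2\ln(2n)}\le 2\sigma\sqrt{2\ln n}$ for $n\ge 2$) is exactly the standard proof underlying that citation. The only blemish is the $n=1$ remark — the stated bound is actually false there rather than vacuous unless $\sigma_1=0$ — but this edge case is irrelevant to how the lemma is used and does not affect correctness of your derivation for $n\ge 2$.
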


The number of cost vectors in $\mathbb{N}_{h+1}\times \mathbb{N}_{h}$ is at most $\exp\left(\gamma \cdot k \cdot \log \|P\|_0  \cdot \min(k_i + 2^{2h}, 2^{2h}\cdot 2^i) \cdot i\cdot h)\right)$ for some absolute constant $\gamma$ due to Lemma \ref{lem:netsize}.
With the bound on the variance (Lemma \ref{lem:variance} and conditioned on event $\calE$), we then have

\begin{eqnarray}
\nonumber
& &\sum_{h=1}^{\log \varepsilon^{-2}} \mathbb{E}_{\Omega} \mathbb{E}_{g}   \left[ \left.\sup_{v^{\calS,h+1}-v^{\calS,h}\in \mathbb{N}_{h+1}\times \mathbb{N}_{h}}  \left\vert\frac{\sum_{C_j \in T_{i}}\sum_{p\in C_j\cap \Omega}(v_p^{\calS,2^{-(h+1)}} - v_p^{\calS,2^{-h}})w_p\cdot g_p}{\left( \cost(P,\calS) + \cost(P,\greedy)\right)}  \right\vert \right\vert \calE\right] \\
\nonumber
&\leq & \sum_{h=1}^{\log \varepsilon^{-2}} \sqrt{\gamma \cdot k \cdot \log \|P\|_0  \cdot \min(k_i + 2^{2h}, 2^{2h}\cdot 2^i) \cdot i \cdot h \cdot \frac{2^{-2h}}{|\Omega|} \cdot \frac{k \cdot k_i 2^{i(z-1)}}{(k+k_i\cdot 2^i)^2}} \\
\label{eq:vareventE}
& \leq & 2 \sqrt{\gamma \cdot k \cdot \log \|P\|_0  \cdot \min(k_i , 2^i) \cdot i \cdot \log^3 \varepsilon^{-1} \cdot \frac{1}{|\Omega|} \cdot \frac{k \cdot k_i 2^{i(z-1)}}{(k+k_i\cdot 2^i)^2}}.
\end{eqnarray}

Conditioned on event $\calE$ not holding, we then have using a similar calculation

\begin{eqnarray}
\nonumber
& &\sum_{h=1}^{\log \varepsilon^{-2}} \mathbb{E}_{\Omega} \mathbb{E}_{g}   \left[ \left. \sup_{v^{\calS,h+1}-v^{\calS,h}\in \mathbb{N}_{h+1}\times \mathbb{N}_{h}}\left\vert \frac{\sum_{C_j \in T_{i}}\sum_{p\in C_j\cap \Omega}(v_p^{\calS,2^{-(h+1)}} - v_p^{\calS,2^{-h}})w_p\cdot g_p}{\left( \cost(P,\calS) + \cost(P,\greedy)\right)}  \right\vert \right\vert \overline{\calE}\right] \\
\nonumber
&\leq & \sum_{h=1}^{\log \varepsilon^{-2}} \sqrt{\gamma \cdot k \cdot \log \|P\|_0  \cdot \min(k_i + 2^{2h}, 2^{2h}\cdot 2^i) \cdot i \cdot h \cdot \frac{2^{-2h}\cdot k}{|\Omega|}  \cdot \frac{k \cdot k_i 2^{i(z-1)}}{(k+k_i\cdot 2^i)^2}} \\
\label{eq:vareventnotE}
& \leq & 2 \sqrt{\gamma \cdot k^2 \cdot \log \|P\|_0  \cdot \min(k_i , 2^i) \cdot i \cdot \log^3 \varepsilon^{-1} \cdot \frac{1}{|\Omega|} \cdot \frac{k \cdot k_i 2^{i(z-1)}}{(k+k_i\cdot 2^i)^2}}.
\end{eqnarray}

We have $\mathbb{P}[\overline{\calE}]\leq 1/k^2$ due to Lemma \ref{lem:eventE}.
Since $\|P\|_0 \leq \text{poly}(k,\varepsilon^{-1})$, $2^{i}\leq O(1)\cdot \varepsilon^{-2}$, we can combine Equations \ref{eq:vareventE} and \ref{eq:vareventnotE} with the law of total expectation to obtain

\begin{eqnarray}
\nonumber
& &\sum_{h=1}^{\log \varepsilon^{-2}} \mathbb{E}_{\Omega} \mathbb{E}_{g}   \left[  \sup_{v^{\calS,h+1}-v^{\calS,h}\in \mathbb{N}_{h+1}\times \mathbb{N}_{h}} \left\vert\frac{\sum_{C_j \in T_{i}}\sum_{p\in C_j\cap \Omega}(v_p^{\calS,2^{-(h+1)}} - v_p^{\calS,2^{-h}})w_p\cdot g_p}{\left( \cost(P,\calS) + \cost(P,\greedy)\right)}  \right\vert\right] \\
\nonumber
&\leq & 2 \sqrt{\gamma \cdot k \cdot \log \|P\|_0  \cdot \min(k_i , 2^i) \cdot i \cdot \log^3 \varepsilon^{-1} \cdot \frac{1}{|\Omega|} \cdot \frac{k \cdot k_i \cdot 2^{i(z-1)}}{(k+k_i\cdot 2^i)^2}}  \\
\nonumber
&  & + 2 \sqrt{\gamma \cdot k^2 \cdot \log \|P\|_0  \cdot \min(k_i , 2^i) \cdot i \cdot \log^3 \varepsilon^{-1} \cdot \frac{1}{|\Omega|} \cdot \frac{k \cdot k_i \cdot 2^{i(z-1)}}{(k+k_i\cdot 2^i)^2}} \cdot \frac{1}{k^2} \\
\label{eq:finalbound}
&\leq & 4 \sqrt{\gamma \cdot k \cdot \log \|P\|_0  \cdot \min(k_i , 2^i) \cdot i \cdot \log^3 \varepsilon^{-1} \cdot \frac{1}{|\Omega|} \cdot \frac{k \cdot k_i \cdot 2^{i(z-1)}}{(k+k_i\cdot 2^i)^2}} 
\end{eqnarray}
Using a straightforward, but tedious calculation, we have $\min(k_i,2^i)\frac{k \cdot k_i \cdot 2^{i(z-1)}}{(k+k_i\cdot 2^i)^2} \in O(k^{z/(z+2)})$.
Specifically, if $\min(k_i,2^i) = k_i$, the term may be bounded by $\frac{k_i^2 \cdot k\cdot 2^{i(z-1)}}{k^{3-z}\cdot (k_i\cdot 2^i)^{z-1}}$. If $\min(k_i,2^i) = 2^i$, the term may be bounded by $\frac{k_i \cdot k\cdot 2^{i\cdot z}}{k^{2-z}\cdot (k_i\cdot 2^i)^{z}}$. Setting both terms to be equal and solving for $k_i$ yields $k_i = k^{(z+1)/(z+2)}$. Inserting that value of $k_i$ back into $\frac{k_i^2 \cdot k\cdot 2^{i(z-1)}}{k^{2-z}\cdot (k_i\cdot 2^i)^{z}}$ then yields the upper bound $k^{z/(z+2)}$.
Therefore, by our choice of $|\Omega|$, we can bound Eq. \ref{eq:finalbound} by $O(1)\frac{\varepsilon^{-2}}{\log^3 \varepsilon^{-1}}$.

\paragraph{Completing the Proof for Eq. \ref{eq:telescopelarge}}

Here, we use Lemma \ref{lem:netsizelarge} and Assumption 2 to show that the number of cost vectors in $\mathbb{N}_{h+1}\times \mathbb{N}_{h}$ is at most $\exp\left(\gamma \cdot k \cdot \log \|P\|_0  \cdot \varepsilon^{-2} \log h/\varepsilon)\right)$. Conditioned on event $\calE$, we therefore have

\begin{eqnarray*}
\nonumber
& &\sum_{\log \varepsilon^{-2}}^{\infty} \mathbb{E}_{\Omega} \mathbb{E}_{g}   \left[ \sup_{v^{\calS,h+1}-v^{\calS,h}\in \mathbb{N}_{h+1}\times \mathbb{N}_{h}} \left. \left\vert  \frac{\sum_{C_j \in T_{i}}\sum_{p\in C_j\cap \Omega}(v_p^{\calS,2^{-(h+1)}} - v_p^{\calS,2^{-h}})w_p}{\left( \cost(P,\calS) + \cost(P,\greedy)\right)}g_p  \right\vert \right\vert \calE\right] \\
&\leq & \sum_{\log \varepsilon^{-2}}^{\infty} O(1) \cdot  \sqrt{\gamma \cdot k \cdot \log \|P\|_0  \cdot \varepsilon^{-2} \cdot \log h/\varepsilon \cdot \frac{2^{-2h}}{|\Omega|} \cdot \frac{k\cdot k_i\cdot 2^{i(z-1)}}{(k+k_i\cdot 2^i)^2}} \\
&\leq & \sum_{1}^{\infty} O(1) \cdot  \sqrt{\gamma \cdot k \cdot \log \|P\|_0   \cdot \log h/\varepsilon \cdot \frac{2^{-2h}}{|\Omega|} \cdot \frac{k\cdot k_i\cdot 2^{i(z-1)}}{(k+k_i\cdot 2^i)^2}} \\
&\leq & O(1) \cdot  \sqrt{\gamma \cdot k \cdot \log \|P\|_0   \cdot \log 1/\varepsilon \cdot \frac{1}{|\Omega|} \cdot \frac{k\cdot k_i\cdot 2^{i(z-1)}}{(k+k_i\cdot 2^i)^2}} 
\end{eqnarray*}

Similarly, if $\calE$ does not hold, we have

\begin{eqnarray*}
\nonumber
& &\sum_{\log \varepsilon^{-2}}^{\infty} \mathbb{E}_{\Omega} \mathbb{E}_{g}   \left[ \sup_{v^{\calS,h+1}-v^{\calS,h}\in \mathbb{N}_{h+1}\times \mathbb{N}_{h}}\left. \left\vert  \frac{\sum_{C_j \in T_{i}}\sum_{p\in C_j\cap \Omega}(v_p^{\calS,2^{-(h+1)}} - v_p^{\calS,2^{-h}})w_p}{\left( \cost(P,\calS) + \cost(P,\greedy)\right)}g_p  \right\vert\right\vert \overline{\calE}\right] \\
&\leq & \sum_{\log \varepsilon^{-2}}^{\infty} O(1) \cdot  \sqrt{\gamma \cdot k \cdot \log \|P\|_0  \cdot \varepsilon^{-2} \cdot \log h/\varepsilon \cdot \frac{2^{-2h}k}{|\Omega|} \cdot \frac{k\cdot k_i\cdot 2^{i(z-1)}}{(k+k_i\cdot 2^i)^2}} \\
&\leq & \sum_{1}^{\infty} O(1) \cdot  \sqrt{\gamma \cdot k \cdot \log \|P\|_0   \cdot \log h/\varepsilon \cdot \frac{2^{-2h}k}{|\Omega|} \cdot \frac{k\cdot k_i\cdot 2^{i(z-1)}}{(k+k_i\cdot 2^i)^2}} \\
&\leq & O(1) \cdot  \sqrt{\gamma \cdot k \cdot \log \|P\|_0   \cdot \log 1/\varepsilon \cdot \frac{k}{|\Omega|} \cdot \frac{k\cdot k_i\cdot 2^{i(z-1)}}{(k+k_i\cdot 2^i)^2}} 
\end{eqnarray*}

We have $\mathbb{P}[\overline{\calE}]\leq 1/k^2$ due to Lemma \ref{lem:eventE}.
Since $\|P\|_0 \leq \text{poly}(k,\varepsilon^{-1})$, $2^{i}\leq O(1)\cdot \varepsilon^{-2}$ and by our choice of $|\Omega|$, we can combine the last two equations with the law of total expectation to obtain
\begin{eqnarray}
\nonumber
& &\sum_{\log \varepsilon^{-2}}^{\infty} \mathbb{E}_{\Omega} \mathbb{E}_{g}   \left[ \sup_{v^{\calS,h+1}-v^{\calS,h}\in \mathbb{N}_{h+1}\times \mathbb{N}_{h}}\left\vert  \frac{\sum_{C_j \in T_{i}}\sum_{p\in C_j\cap \Omega}(v_p^{\calS,2^{-(h+1)}} - v_p^{\calS,2^{-h}})w_p}{\left( \cost(P,\calS) + \cost(P,\greedy)\right)}g_p  \right\vert\right] \\
\label{eq:finalboundlarge}
&\leq & O(1) \cdot  \sqrt{k \cdot \log k  \cdot \min(k_i , 2^i) \cdot \log^5 \varepsilon^{-1} \cdot \frac{1}{|\Omega|} \cdot \frac{k\cdot k_i\cdot 2^{i(z-1)}}{(k+k_i\cdot 2^i)^2}}.
\end{eqnarray}

Observe that Eq. \ref{eq:finalboundlarge} and Eq. \ref{eq:finalbound} are essentially identical up to lower order terms.

\paragraph{Completing the Proof for Eq. \ref{eq:base}}

Here, we first split Eq. \ref{eq:base} into two estimators that will be easier to handle. We split the estimator into two parts as follows. First, let $q_j:=\frac{\sum_{p\in C_j}v_p^{\calS,2^{-1}}}{|C_j|}$. Now we consider
\begin{eqnarray}
\label{eq:firstest}
& &\frac{1}{|\Omega|}\frac{\sum_{C_j \in T_{i}}\sum_{p\in C_j\cap \Omega}(v_p^{\calS,2^{-1}} - q_j)w_p}{\left( \cost(P,\calS) + \cost(P,\greedy)\right)}g_p \\
\label{eq:secondest}
& &+ \frac{1}{|\Omega|}\frac{\sum_{C_j \in T_{i}}\sum_{p\in C_j\cap \Omega}q_j \cdot w_p}{\left( \cost(P,\calS) + \cost(P,\greedy)\right)}g_p 
\end{eqnarray}
Thus, Equation~\ref{eq:base} becomes
\begin{eqnarray}
\nonumber
& & \mathbb{E}_{\Omega} \mathbb{E}_{g}  \left[  \sup_{v^{\calS,1}\in \mathbb{N}_{2^{-1}}} \left\vert \frac{\sum_{C_j \in T_{i}}\sum_{p\in C_j\cap \Omega}  v_p^{\calS,2^{-1}}w_p}{|\Omega|\cdot \left( \cost(P,\calS) + \cost(P,\greedy)\right)}g_p \right\vert \right] \\
\label{eq:base1}
&=&\mathbb{E}_{\Omega} \mathbb{E}_{g}  \left[  \sup_{v^{\calS,1}\in \mathbb{N}_{2^{-1}}}\left\vert \frac{\sum_{C_j \in T_{i}}\sum_{p\in C_j\cap \Omega}|v_p^{\calS,2^{-1}}-q_j|w_p}{|\Omega|\cdot\left( \cost(P,\calS) + \cost(P,\greedy)\right)}g_p \right\vert \right] \\
\label{eq:base2}
&+& \mathbb{E}_{\Omega} \mathbb{E}_{g}  \left[  \sup_{v^{\calS,1}\in \mathbb{N}_{2^{-1}}} \left\vert\frac{\sum_{C_j \in T_{i}}\sum_{p\in C_j\cap \Omega}q_j\cdot w_p}{|\Omega|\cdot\left( \cost(P,\calS) + \cost(P,\greedy)\right)}g_p \right\vert \right]
\end{eqnarray}

For the variance of the estimator used for Eq. \ref{eq:base1}, we use the following lemma.

\begin{lemma}
\label{lem:varbase}
If Assumption 4 holds, the variance of $\frac{1}{|\Omega|}\frac{\sum_{C_j \in T_{i}}\sum_{p\in C_j\cap \Omega}(v_p^{\calS,2^{-1}} - q_j)w_p}{\left( \cost(P,\calS) + \cost(P,\greedy)\right)}g_p$ is at most
\begin{eqnarray*}
\gamma \cdot \frac{1}{|\Omega|}  \cdot \frac{k\cdot k_i\cdot 2^{i(z-1)}}{(k+k_i\cdot 2^i)^2}& & \text{conditioned on event } \calE \\
\gamma \cdot \frac{k}{|\Omega|} \cdot \frac{k\cdot k_i\cdot 2^{i(z-1)}}{(k+k_i\cdot 2^i)^2}& & \text{conditioned on event } \overline{\calE}
\end{eqnarray*} 
for an absolute constant $\gamma$.
\end{lemma}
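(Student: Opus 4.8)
\emph{Plan of proof.} The plan is to reuse the proof of Lemma~\ref{lem:variance} almost verbatim, letting $v_p^{\calS,2^{-1}}-q_j$ play the role that $v_p^{\calS,2^{-(h+1)}}-v_p^{\calS,2^{-h}}$ played there, with the factor $2^{-h}$ replaced by an absolute constant (the ``$h=0$'' case). Since the $g_p$ are independent standard Gaussians the estimator is a centered Gaussian, so its variance equals
$$\sum_{C_j \in T_{i}}\sum_{p\in C_j\cap \Omega}\frac{1}{|\Omega|^2}\left(\frac{(v_p^{\calS,2^{-1}} - q_j)\cdot w_p}{\cost(P,\calS) + \cost(P,\greedy)}\right)^2,$$
and the whole task reduces to establishing $(a)$ a pointwise bound $|v_p^{\calS,2^{-1}}-q_j|\leq O(1)\cdot\sqrt{\cost(p,\calS)^{z-1}\cost(p,\greedy)^{3-z}}$, and then $(b)$ carrying out the same manipulation of $w_p$ and of the type-$i$ relations as in Lemma~\ref{lem:variance}.

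For $(a)$ I would proceed as follows. By Lemma~\ref{lem:netsize} applied with $\alpha=1/2$, every $p'\in C_j$ satisfies $|v_{p'}^{\calS,2^{-1}}-\cost(p',\calS)|\leq\frac{1}{2}\sqrt{\cost(p',\calS)^{z-1}\cost(p',\greedy)^{3-z}}$. Moreover $p\mapsto\dist(p,\calS)$ is $1$-Lipschitz, so $|\dist(p,\calS)-\dist(p',\calS)|\leq\|p-p'\|$, and since $C_j$ is a cluster of $\greedy$, Assumption~4 together with the triangle inequality gives $\|p-p'\|\leq\dist(p,\greedy)+\dist(p',\greedy)=O(\dist(p,\greedy))$ for all $p,p'\in C_j$; as $i\geq 3$ this also forces $\dist(p',\calS)=O(\dist(p,\calS))$ and $\cost(p,\calS)=\Theta(2^i\cost(p,\greedy))$ throughout $C_j$. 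Combining these with $|a^z-b^z|\leq O(1)\cdot\max(a,b)^{z-1}|a-b|$ yields, for $z\in\{1,2\}$,
$$|\cost(p,\calS)-\cost(p',\calS)|\leq O(1)\cdot\dist(p,\calS)^{z-1}\dist(p,\greedy)= O(1)\cdot\sqrt{\cost(p,\calS)^{z-1}\cost(p,\greedy)^{3-z}}.$$
Since all costs on $C_j$ agree up to constant factors, averaging the net bound over $p'\in C_j$ and combining with the previous display shows $|q_j-\cost(p,\calS)|=O(1)\cdot\sqrt{\cost(p,\calS)^{z-1}\cost(p,\greedy)^{3-z}}$; adding the net bound for $p$ itself then gives the estimate claimed in $(a)$.

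For $(b)$, substitute $w_p=\frac{\cost(P,\greedy)|C_j|}{\cost(C_j,\greedy)}$ and invoke Assumption~4 and the definition of type $i$ exactly as in Lemma~\ref{lem:variance}: $\cost(p,\calS)|C_j|=O(\cost(C_j,\calS))$, $\cost(p,\greedy)|C_j|=O(\cost(C_j,\greedy))$, $k_i\cost(C_j,\calS)=O(\cost(P,\calS))$, and $\cost(P,\greedy)=O(k)\cdot\cost(C_j,\greedy)$. This reduces the variance to $O(1)\sum_{C_j\in T_i}|C_j\cap\Omega|\cdot\frac{1}{|\Omega|^2}\cdot\frac{k}{k_i}\cdot\frac{k\cdot k_i\cdot 2^{i(z-1)}}{(k+k_i2^i)^2}$. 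On event $\calE$ one has $|C_j\cap\Omega|=(1\pm\varepsilon)|\Omega|\cdot\cost(C_j,\greedy)/\cost(P,\greedy)$, so $\sum_{C_j\in T_i}|C_j\cap\Omega|=O(|\Omega|\cdot k_i/k)$ and the bound becomes $O(1)\cdot\frac{1}{|\Omega|}\cdot\frac{k\cdot k_i\cdot 2^{i(z-1)}}{(k+k_i2^i)^2}$; on $\overline{\calE}$ one uses only $\sum_{C_j\in T_i}|C_j\cap\Omega|\leq|\Omega|$ and $k_i^{-1}\leq 1$, giving $O(1)\cdot\frac{k}{|\Omega|}\cdot\frac{k\cdot k_i\cdot 2^{i(z-1)}}{(k+k_i2^i)^2}$, as claimed.

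The main obstacle is step $(a)$: the crude bound $|\cost(p,\calS)-q_j|=O(\cost(p,\calS))$ — all one gets from the costs on $C_j$ being within a constant factor — is too weak by a factor $2^{i/2}$ for $k$-means, so one genuinely has to use the geometry of a $\greedy$-cluster of type $i\geq 3$, namely that its points lie within distance $O(\dist(\cdot,\greedy))$ of one another, to conclude that their costs to any center of $\calS$ deviate from the cluster mean by at most $O(\dist(p,\calS)^{z-1}\dist(p,\greedy))$. Everything after this bound is a mechanical rerun of Lemma~\ref{lem:variance}.
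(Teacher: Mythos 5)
Your proposal is correct and follows essentially the same route as the paper: reduce everything to the pointwise bound $|v_p^{\calS,2^{-1}}-q_j|\leq O(1)\sqrt{\cost(p,\calS)^{z-1}\cost(p,\greedy)^{3-z}}$ and then rerun the computation of Lemma~\ref{lem:variance} with the $2^{-h}$ factor replaced by a constant. The only (minor) difference is how that pointwise bound is obtained — the paper factorizes $|v-q|=|\sqrt{v}-\sqrt{q}|(\sqrt{v}+\sqrt{q})$ and applies the triangle inequality directly, while you pass through the $\alpha=1/2$ net guarantee of Lemma~\ref{lem:netsize} plus Lipschitzness of $\dist(\cdot,\calS)$ and intra-cluster averaging — which is, if anything, a slightly more explicit justification of the same geometric facts from Assumption~4 and type $i\geq 3$.
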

\begin{proof}
The proof of this is very close to the proof of Lemma 9 from \cite{CSS21}. 
For $k$-median, this is a straightforward application of the triangle inequality.
For $k$-means, the analysis is slightly more involved and included for completeness. Thus, throughout this proof, we have $z=2$.

We will bound $|v_p^{\calS,2^{-1}} - q_j|$ for any point $p\in C_j$.
Due to the triangle inequality and by Assumption 4 which states that all points have roughly equal distance to their center in $\greedy$, we have 
$$|\sqrt{v_p^{\calS,2^{-1}}} - \sqrt{q_j}| \leq O(1) \cdot \sqrt{\cost(p,\greedy)}.$$
Futhermore, again due to the triangle inequality, $C_j\in T_i$ with $i>3$ and Assumption 4, we have $(\sqrt{v_p^{\calS,2^{-1}}} + \sqrt{q_j}) = O(1) \sqrt{\cost(p,\calS}$. 
Therefore
\begin{eqnarray*}
|v_p^{\calS,2^{-1}} - q_j|  &=& |\sqrt{v_p^{\calS,2^{-1}}} - \sqrt{q_j}| \cdot (\sqrt{v_p^{\calS,2^{-1}}} + \sqrt{q_j}) = O(1) \sqrt{\cost(p,\calS)\cost(p,\greedy)}
\end{eqnarray*}
Using this bound and the same steps as in Lemma \ref{lem:variance}, we then have
\begin{eqnarray*}
& &\sum_{C_j \in T_{i}}\sum_{p\in C_j\cap \Omega}\frac{1}{|\Omega|^2}\left(\frac{(v_p^{\calS,2^{-1}} - q_j)w_p}{\left( \cost(P,\calS) + \cost(P,\greedy)\right)}\right)^2 \\
&\leq & O(1) \cdot \sum_{C_j \in T_{i}}\sum_{p\in C_j\cap \Omega}\frac{1}{|\Omega|^2}\frac{\cost(p,\calS)\cost(p,\greedy) \cost(P,\calS)^2 |C_j|^2}{\cost(C_j,\greedy)^2 \cdot\left( \cost(P,\calS) + \cost(P,\greedy)\right)^2} \\
&\leq & O(1) \cdot \sum_{C_j \in T_{i}}\sum_{p\in C_j\cap \Omega}\left(\frac{k}{k_i \cdot |\Omega|^2 }\right) \cdot \frac{k\cdot k_i\cdot 2^i}{(k+k_i\cdot 2^i)^2}
\end{eqnarray*}
Conditioned on event $\calE$, this now becomes  $O(1)\cdot \frac{1}{|\Omega|}\cdot \frac{k\cdot k_i\cdot 2^i}{(k+k_i\cdot 2^i)^2}$ and similarly, if event $\calE$ does not hold, we have the bound $\frac{k}{|\Omega| }  \cdot \frac{k\cdot k_i\cdot 2^i}{(k+k_i\cdot 2^i)^2}$.
\end{proof}

We now focus on the variance of the estimator used for Eq. \ref{eq:base2}. Due to Assumption 4, we have $\cost(T_i,\calS) = O(1) \cdot k_i\cost(C_j,\calS)$, for any $C_j\in T_i$ Thus
\begin{eqnarray}
\nonumber
& & \mathbb{E}_{\Omega} \mathbb{E}_{g}  \left[ \left\vert \sup_{v^{\calS,1}\in \mathbb{N}_{2^{-1}}} \frac{\sum_{C_j \in T_{i}}\sum_{p\in C_j\cap \Omega}q_j\cdot w_p}{|\Omega|\cdot\left( \cost(P,\calS) + \cost(P,\greedy)\right)}g_p \right\vert \right] \\
\nonumber
&\leq & \mathbb{E}_{\Omega} \mathbb{E}_{g}  \left[ \left\vert \sup_{v^{\calS,1}\in \mathbb{N}_{2^{-1}}} \frac{\sum_{C_j \in T_{i}}\sum_{p\in C_j\cap \Omega}q_j\cdot w_p}{|\Omega|\cdot  \cost(T_i,\calS) }g_p \right\vert \right]\\
\label{eq:base3}
&\leq &  \mathbb{E}_{\Omega} \mathbb{E}_{g}  \left[ \left\vert \sup_{v^{\calS,1}\in \mathbb{N}_{2^{-1}}} \max_{C_j \in T_{i}}\frac{\sum_{p\in C_j\cap \Omega}q_j\cdot w_p}{|\Omega|\cdot  \cost(C_j,\calS) }g_p \right\vert \right]
\end{eqnarray}

We now obtain the following variance for the estimator used in Equation \ref{eq:base3}.

\begin{lemma}
\label{lem:varq}
If Assumption 4 holds, the variance of $\frac{\sum_{p\in C_j\cap \Omega}q_j\cdot w_p}{|\Omega|\cdot  \cost(C_j,\calS) }g_p$, given that $C_j\in T_i$ with $i\in \{3,\ldots,\log \varepsilon^{-2}\}$ is at most
\begin{eqnarray*}
\gamma \cdot \frac{k}{|\Omega|}  & & \text{conditioned on event } \calE \\
\gamma \cdot \frac{k^2}{|\Omega|} & & \text{conditioned on event } \overline{\calE}
\end{eqnarray*} 
for an absolute constant $\gamma$.
\end{lemma}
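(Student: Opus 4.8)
The plan is to follow the same template used in Lemmas \ref{lem:variance} and \ref{lem:varbase}: since the $g_p$ are independent standard Gaussians, the estimator $\frac{\sum_{p\in C_j\cap \Omega}q_j\cdot w_p}{|\Omega|\cdot \cost(C_j,\calS)}\,g_p$ is a centered Gaussian whose variance is exactly
$$\sum_{p\in C_j\cap \Omega}\frac{1}{|\Omega|^2}\left(\frac{q_j\cdot w_p}{\cost(C_j,\calS)}\right)^2,$$
so it suffices to (i) bound a single summand and (ii) bound the number of summands, i.e.\ $|C_j\cap\Omega|$, in each of the two regimes.

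For step (i), recall that $w_p=\frac{\cost(P,\greedy)\,|C_j|}{\cost(C_j,\greedy)}$, which is the \emph{same} value for every $p\in C_j$. Since $C_j\in T_i$ with $i\geq 3$, Assumption 4(3) together with the definition of type $i$ forces $\cost(p',\greedy)\leq \tfrac14\cost(p',\calS)$ for every $p'\in C_j$; feeding this into the net guarantee of Definition \ref{def:clusteringnets} (with precision $\alpha=2^{-1}$) shows that the error $|v_{p'}^{\calS,2^{-1}}-\cost(p',\calS)|$ is a small constant fraction of $\cost(p',\calS)$, hence $v_{p'}^{\calS,2^{-1}}=O(\cost(p',\calS))$ and, averaging over $C_j$, $q_j=O(\cost(C_j,\calS)/|C_j|)$. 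Substituting, the factor $\cost(C_j,\calS)$ cancels and the ratio collapses to $\frac{q_j w_p}{\cost(C_j,\calS)}=O(1)\cdot\frac{\cost(P,\greedy)}{\cost(C_j,\greedy)}$, which by Assumption 4(1) and (2) (i.e.\ $\greedy$ has $O(k)$ clusters of roughly equal cost) is $O(k)$. Thus every summand is $O(k^2/|\Omega|^2)$.

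For step (ii), conditioned on $\calE$ we exploit that all weights inside $C_j$ equal the common value $w=\frac{\cost(P,\greedy)|C_j|}{\cost(C_j,\greedy)}$, so the event $\frac1{|\Omega|}\sum_{p\in C_j\cap\Omega}w_p=(1\pm\varepsilon)|C_j|$ is equivalent to $|C_j\cap\Omega|=(1\pm\varepsilon)\frac{|\Omega|\,\cost(C_j,\greedy)}{\cost(P,\greedy)}=\Theta(|\Omega|/k)$; multiplying by the per-summand bound gives variance $O(k/|\Omega|)$. Conditioned on $\overline{\calE}$ we instead use the trivial bound $|C_j\cap\Omega|\leq|\Omega|$, which yields variance $O(k^2/|\Omega|)$. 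Absorbing the hidden constants into $\gamma$ completes the argument.

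The only place where real care is needed is the constant-factor bookkeeping in step (i): one must verify that $i\geq 3$ combined with Assumption 4(3) makes $\cost(p',\greedy)$ a small constant fraction of $\cost(p',\calS)$ uniformly over $C_j$, which is precisely what renders the net error negligible against $\cost(p',\calS)$ and lets us replace $q_j$ by $O(\cost(C_j,\calS)/|C_j|)$. One also has to be precise that the sampling distribution is uniform within each $\greedy$-cluster, so that $\calE$ controls $|C_j\cap\Omega|$ directly rather than merely $\sum_{p\in C_j\cap\Omega}w_p$. Modulo this, the computation is identical to those in Lemmas \ref{lem:variance} and \ref{lem:varbase} and to Lemma 9 of \cite{CSS21}.
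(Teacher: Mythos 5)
Your proposal is correct and follows essentially the same route as the paper's proof: express the Gaussian variance as $\sum_{p\in C_j\cap\Omega}\frac{1}{|\Omega|^2}\bigl(\frac{q_j w_p}{\cost(C_j,\calS)}\bigr)^2$, bound each summand by $O(k^2/|\Omega|^2)$ via $w_p=\frac{\cost(P,\greedy)|C_j|}{\cost(C_j,\greedy)}$ and $q_j|C_j|=O(\cost(C_j,\calS))$, and then count $|C_j\cap\Omega|=\Theta(|\Omega|/k)$ under $\calE$ and $\leq|\Omega|$ otherwise. The only difference is that you make explicit the step $q_j=O(\cost(C_j,\calS)/|C_j|)$ (via the net error and the type-$i$, $i\geq 3$ condition), which the paper leaves implicit.
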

\begin{proof}
Recall by Assumption 4 $\cost(P,\calS) = O(1) \cdot k \cdot \cost(C_j,\calS)$.
We have
\begin{eqnarray*}
& &\sum_{p\in C_j\cap \Omega}\left(\frac{q_j\cdot w_p}{|\Omega|\cdot  \cost(C_j,\calS) }\right)^2 \\
&=&\sum_{p\in C_j\cap \Omega}\left(\frac{q_j\cdot \cost(P,\greedy)\cdot |C_j|}{|\Omega|\cdot \cost(C_j,\greedy)\cdot  \cost(C_j,\calS) }\right)^2 \\
&=&O(1) \cdot \sum_{p\in C_j\cap \Omega}\left(\frac{k}{|\Omega|}\right)^2
\end{eqnarray*}
Conditioned on event $\calE$, $|C_j \cap \Omega| = \frac{1}{k}\cdot |\Omega|$ and this now becomes  $O(1)\cdot \frac{k}{|\Omega|}$. Otherwise, we have the bound $\frac{k^2}{|\Omega| }$.
\end{proof}

We now bound Equations \ref{eq:base1} and \ref{eq:base3}. For the former, we have $|\mathbb{N}_{2^{-1}}|\leq \exp\left(\gamma \cdot k \cdot \log \|P\|_0   \cdot \min(k_i, 2^i) \cdot i)\right)$. Thus, combined with Lemma \ref{lem:varbase} and conditioning on event $\calE$, we have
\begin{eqnarray*}
& &\mathbb{E}_{\Omega} \mathbb{E}_{g}  \left[ \sup_{v^{\calS,1}\in \mathbb{N}_{2^{-1}}}  \left\vert\frac{\sum_{C_j \in T_{i}}\sum_{p\in C_j\cap \Omega}|v_p^{\calS,2^{-1}}-q_j|w_p}{|\Omega|\cdot\left( \cost(P,\calS) + \cost(P,\greedy)\right)}g_p \right\vert \vert \calE \right] \\
&=& O(1)\sqrt{\gamma \cdot k \cdot \log \|P\|_0   \cdot \min(k_i, 2^i) \cdot i) \cdot  \frac{1}{|\Omega|}  \cdot \frac{k\cdot k_i\cdot 2^{i(z-1)}}{(k+k_i\cdot 2^i)^2}} 
\end{eqnarray*}

Similarly, not conditioning on event $\calE$ implies
\begin{eqnarray*}
& &\mathbb{E}_{\Omega} \mathbb{E}_{g}  \left[ \sup_{v^{\calS,1}\in \mathbb{N}_{2^{-1}}}  \left\vert \frac{\sum_{C_j \in T_{i}}\sum_{p\in C_j\cap \Omega}|v_p^{\calS,2^{-1}}-q_j|w_p}{|\Omega|\cdot\left( \cost(P,\calS) + \cost(P,\greedy)\right)}g_p \right\vert \vert \overline{\calE}\right] \\
&\leq & O(1)\sqrt{\gamma \cdot k \cdot \log \|P\|_0   \cdot \min(k_i, 2^i) \cdot i) \cdot  \frac{k}{|\Omega|}  \cdot \frac{k\cdot k_i\cdot 2^{i(z-1)}}{(k+k_i\cdot 2^i)^2}} 
\end{eqnarray*}

We have $\mathbb{P}[\overline{\calE}]\leq 1/k^2$ due to Lemma \ref{lem:eventE}.
Plugging in $\|P\|_0 \leq \text{poly}(k,\varepsilon^{-1})$ and our choice of $|\Omega|$, we can combine the last two equations with the law of total expectation to obtain
\begin{eqnarray}
\nonumber
& &\mathbb{E}_{\Omega} \mathbb{E}_{g}  \left[\sup_{v^{\calS,1}\in \mathbb{N}_{2^{-1}}} \left\vert  \frac{\sum_{C_j \in T_{i}}\sum_{p\in C_j\cap \Omega}|v_p^{\calS,2^{-1}}-q_j|w_p}{|\Omega|\cdot\left( \cost(P,\calS) + \cost(P,\greedy)\right)}g_p \right\vert \right] \\
\label{eq:finalboundbase}
&\leq &O(1) \cdot  \sqrt{k \cdot \log k  \cdot \min(k_i , 2^i) \cdot \log^5 \varepsilon^{-1} \cdot \frac{1}{|\Omega|} \cdot \frac{k\cdot k_i\cdot 2^{i(z-1)}}{(k+k_i\cdot 2^i)^2}} 
\end{eqnarray}

For the term in Equation \ref{eq:base3}, we note that $\frac{q_j\cdot w_p}{\cost(C_j,\calS)}=\cost(P,\greedy)$. Thus, for every cluster, we have a net of size $1$, which means we have an overall net of size $k$. We thus obtain 
\begin{eqnarray*}
& &\mathbb{E}_{\Omega} \mathbb{E}_{g}  \left[ \sup_{v^{\calS,1}\in \mathbb{N}_{2^{-1}}} \max_{C_j \in T_{i}} \left\vert  \frac{\sum_{p\in C_j\cap \Omega}q_j\cdot w_p}{|\Omega|\cdot  \cost(C_j,\calS) }g_p \right\vert \vert \calE\right] \\
&\leq & O(1)\sqrt{ \log k \frac{k}{|\Omega|}  \cdot \frac{k\cdot k_i\cdot 2^i}{(k+k_i\cdot 2^i)^2}} 
\end{eqnarray*}

Similarly, conditioning on event $\overline{\calE}$ implies
\begin{eqnarray*}
& &\mathbb{E}_{\Omega} \mathbb{E}_{g}  \left[ \sup_{v^{\calS,1}\in \mathbb{N}_{2^{-1}}} \max_{C_j \in T_{i}}  \left\vert \frac{\sum_{p\in C_j\cap \Omega}q_j\cdot w_p}{|\Omega|\cdot  \cost(C_j,\calS) }g_p \right\vert \vert \overline{\calE}\right] \\
&\leq & O(1)\sqrt{\log k \frac{k^2}{|\Omega|}  \cdot \frac{k\cdot k_i\cdot 2^{i(z-1)}}{(k+k_i\cdot 2^i)^2}} 
\end{eqnarray*}

Combining both terms, using $\mathbb{P}[\overline{\calE}]\leq 1/k^2$ due to Lemma \ref{lem:eventE} and the law of total expectation, we obtain
\begin{eqnarray}
\nonumber
& &\mathbb{E}_{\Omega} \mathbb{E}_{g}  \left[  \sup_{v^{\calS,1}\in \mathbb{N}_{2^{-1}}} \max_{C_j \in T_{i}}  \left\vert\frac{\sum_{p\in C_j\cap \Omega}q_j\cdot w_p}{|\Omega|\cdot  \cost(C_j,\calS) }g_p \right\vert \right] \\
\label{eq:finalboundbaseq}
&\leq &O(1) \cdot  \sqrt{\frac{k \log k}{|\Omega|}} 
\end{eqnarray}

Combining the bounds in Equations \ref{eq:finalbound}, \ref{eq:finalboundlarge}, \ref{eq:finalboundbase} and \ref{eq:finalboundbaseq} for the respective terms in Equations \ref{eq:telescopesmall}, \ref{eq:telescopelarge}, \ref{eq:base1} and \ref{eq:base3} now yields the claim.

\section{Disclosure of Funding Acknowledgements}
Kapser Green Larsen was partially supported by the Independent Research Fund Denmark (DFF) under a Sapere Aude Research Leader grant No 9064-00068B. 

\erclogowrapped{5\baselineskip}David Saulpic has received funding from
the European Research Council (ERC) under the European Union's Horizon 2020
research and innovation programme (Grant agreement No.\ 101019564
``The Design of Modern Fully Dynamic Data Structures (MoDynStruct)''.

Chris Schwiegelshohn was partially supported by the Independent Research Fund Denmark (DFF) under a Sapere Aude Research Leader grant No 1051-00106B and the Innovation Fund Denmark under grant agreement No 0153-00233A. 

Omar Ali Sheikh-Omar was partially supported by the Innovation Fund Denmark under grant agreement No 0153-00233A.

\bibliographystyle{plain}
\bibliography{reference}

\begin{thebibliography}{10}

\bibitem{BachemL018}
Olivier Bachem, Mario Lucic, and Andreas Krause.
\newblock Scalable k -means clustering via lightweight coresets.
\newblock In {\em Proceedings of the 24th {ACM} {SIGKDD} International
  Conference on Knowledge Discovery {\&} Data Mining, {KDD} 2018, London, UK,
  August 19-23, 2018}, pages 1119--1127, 2018.

\bibitem{BachemLL18}
Olivier Bachem, Mario Lucic, and Silvio Lattanzi.
\newblock One-shot coresets: The case of k-clustering.
\newblock In {\em International Conference on Artificial Intelligence and
  Statistics, {AISTATS} 2018, 9-11 April 2018, Playa Blanca, Lanzarote, Canary
  Islands, Spain}, pages 784--792, 2018.

\bibitem{BakerBHJK020}
Daniel~N. Baker, Vladimir Braverman, Lingxiao Huang, Shaofeng~H.{-}C. Jiang,
  Robert Krauthgamer, and Xuan Wu.
\newblock Coresets for clustering in graphs of bounded treewidth.
\newblock In {\em Proceedings of the 37th International Conference on Machine
  Learning, {ICML} 2020, 13-18 July 2020, Virtual Event}, volume 119 of {\em
  Proceedings of Machine Learning Research}, pages 569--579. {PMLR}, 2020.

\bibitem{BFS21}
Sayan Bandyapadhyay, Fedor~V. Fomin, and Kirill Simonov.
\newblock On coresets for fair clustering in metric and euclidean spaces and
  their applications.
\newblock In Nikhil Bansal, Emanuela Merelli, and James Worrell, editors, {\em
  48th International Colloquium on Automata, Languages, and Programming,
  {ICALP} 2021, July 12-16, 2021, Glasgow, Scotland (Virtual Conference)},
  volume 198 of {\em LIPIcs}, pages 23:1--23:15. Schloss Dagstuhl -
  Leibniz-Zentrum f{\"{u}}r Informatik, 2021.

\bibitem{BecchettiBC0S19}
Luca Becchetti, Marc Bury, Vincent Cohen{-}Addad, Fabrizio Grandoni, and Chris
  Schwiegelshohn.
\newblock Oblivious dimension reduction for \emph{k}-means: beyond subspaces
  and the johnson-lindenstrauss lemma.
\newblock In {\em Proceedings of the 51st Annual {ACM} {SIGACT} Symposium on
  Theory of Computing, {STOC}}, pages 1039--1050, 2019.

\bibitem{BoutsidisDM13}
Christos Boutsidis, Petros Drineas, and Malik Magdon{-}Ismail.
\newblock Near-optimal coresets for least-squares regression.
\newblock {\em {IEEE} Trans. Inf. Theory}, 59(10):6880--6892, 2013.

\bibitem{BoutsidisMD09}
Christos Boutsidis, Michael~W. Mahoney, and Petros Drineas.
\newblock Unsupervised feature selection for the
  {\textdollar}k{\textdollar}-means clustering problem.
\newblock In {\em Advances in Neural Information Processing Systems 22: 23rd
  Annual Conference on Neural Information Processing Systems 2009. Proceedings
  of a meeting held 7-10 December 2009, Vancouver, British Columbia, Canada.},
  pages 153--161, 2009.

\bibitem{BoutsidisZD10}
Christos Boutsidis, Anastasios Zouzias, and Petros Drineas.
\newblock Random projections for {\textdollar}k{\textdollar}-means clustering.
\newblock In {\em Advances in Neural Information Processing Systems 23: 24th
  Annual Conference on Neural Information Processing Systems 2010. Proceedings
  of a meeting held 6-9 December 2010, Vancouver, British Columbia, Canada.},
  pages 298--306, 2010.

\bibitem{BoutsidisZMD15}
Christos Boutsidis, Anastasios Zouzias, Michael~W. Mahoney, and Petros Drineas.
\newblock Randomized dimensionality reduction for k-means clustering.
\newblock {\em {IEEE} Trans. Information Theory}, 61(2):1045--1062, 2015.

\bibitem{BCJ22}
Vladimir Braverman, Vincent Cohen{-}Addad, Shaofeng~H.{-}C. Jiang, Robert
  Krauthgamer, Chris Schwiegelshohn, Mads~Bech Toftrup, and Xuan Wu.
\newblock The power of uniform sampling for coresets.
\newblock {\em CoRR}, abs/2209.01901, 2022.

\bibitem{BravermanJKW19}
Vladimir Braverman, Shaofeng~H.{-}C. Jiang, Robert Krauthgamer, and Xuan Wu.
\newblock Coresets for ordered weighted clustering.
\newblock In {\em Proceedings of the 36th International Conference on Machine
  Learning, {ICML} 2019, 9-15 June 2019, Long Beach, California, {USA}}, pages
  744--753, 2019.

\bibitem{BravermanJKW21}
Vladimir Braverman, Shaofeng~H.{-}C. Jiang, Robert Krauthgamer, and Xuan Wu.
\newblock Coresets for clustering in excluded-minor graphs and beyond.
\newblock In D{\'{a}}niel Marx, editor, {\em Proceedings of the 2021 {ACM-SIAM}
  Symposium on Discrete Algorithms, {SODA} 2021, Virtual Conference, January 10
  - 13, 2021}, pages 2679--2696. {SIAM}, 2021.

\bibitem{BJKW21}
Vladimir Braverman, Shaofeng~H.{-}C. Jiang, Robert Krauthgamer, and Xuan Wu.
\newblock Coresets for clustering with missing values.
\newblock In Marc'Aurelio Ranzato, Alina Beygelzimer, Yann~N. Dauphin, Percy
  Liang, and Jennifer~Wortman Vaughan, editors, {\em Advances in Neural
  Information Processing Systems 34: Annual Conference on Neural Information
  Processing Systems 2021, NeurIPS 2021, December 6-14, 2021, virtual}, pages
  17360--17372, 2021.

\bibitem{CW22a}
Moses Charikar and Erik Waingarten.
\newblock The johnson-lindenstrauss lemma for clustering and subspace
  approximation: From coresets to dimension reduction.
\newblock {\em CoRR}, abs/2205.00371, 2022.

\bibitem{CW22b}
Moses Charikar and Erik Waingarten.
\newblock Polylogarithmic sketches for clustering.
\newblock In Mikolaj Bojanczyk, Emanuela Merelli, and David~P. Woodruff,
  editors, {\em 49th International Colloquium on Automata, Languages, and
  Programming, {ICALP} 2022, July 4-8, 2022, Paris, France}, volume 229 of {\em
  LIPIcs}, pages 38:1--38:20. Schloss Dagstuhl - Leibniz-Zentrum f{\"{u}}r
  Informatik, 2022.

\bibitem{Chen09}
Ke~Chen.
\newblock On coresets for k-median and k-means clustering in metric and
  {E}uclidean spaces and their applications.
\newblock {\em {SIAM} J. Comput.}, 39(3):923--947, 2009.

\bibitem{ChN21}
Yeshwanth Cherapanamjeri and Jelani Nelson.
\newblock Terminal embeddings in sublinear time.
\newblock In {\em 62nd {IEEE} Annual Symposium on Foundations of Computer
  Science, {FOCS} 2021, Denver, CO, USA, February 7-10, 2022}, pages
  1209--1216. {IEEE}, 2021.

\bibitem{CEMMP15}
Michael~B. Cohen, Sam Elder, Cameron Musco, Christopher Musco, and Madalina
  Persu.
\newblock Dimensionality reduction for k-means clustering and low rank
  approximation.
\newblock In {\em Proceedings of the Forty-Seventh Annual {ACM} on Symposium on
  Theory of Computing, {STOC} 2015, Portland, OR, USA, June 14-17, 2015}, pages
  163--172, 2015.

\bibitem{CGSS22}
Vincent Cohen{-}Addad, Kasper~Green Larsen, David Saulpic, and Chris
  Schwiegelshohn.
\newblock Towards optimal lower bounds for k-median and k-means coresets.
\newblock In Stefano Leonardi and Anupam Gupta, editors, {\em {STOC} '22: 54th
  Annual {ACM} {SIGACT} Symposium on Theory of Computing, Rome, Italy, June 20
  - 24, 2022}, pages 1038--1051. {ACM}, 2022.

\bibitem{Cohen-AddadL19}
Vincent Cohen{-}Addad and Jason Li.
\newblock On the fixed-parameter tractability of capacitated clustering.
\newblock In {\em 46th International Colloquium on Automata, Languages, and
  Programming, {ICALP} 2019, July 9-12, 2019, Patras, Greece}, pages
  41:1--41:14, 2019.

\bibitem{Cohen-AddadSS21}
Vincent Cohen{-}Addad, David Saulpic, and Chris Schwiegelshohn.
\newblock Improved coresets and sublinear algorithms for power means in
  euclidean spaces.
\newblock In Marc'Aurelio Ranzato, Alina Beygelzimer, Yann~N. Dauphin, Percy
  Liang, and Jennifer~Wortman Vaughan, editors, {\em Advances in Neural
  Information Processing Systems 34: Annual Conference on Neural Information
  Processing Systems 2021, NeurIPS 2021, December 6-14, 2021, virtual}, pages
  21085--21098, 2021.

\bibitem{CSS21}
Vincent Cohen{-}Addad, David Saulpic, and Chris Schwiegelshohn.
\newblock A new coreset framework for clustering.
\newblock In Samir Khuller and Virginia~Vassilevska Williams, editors, {\em
  {STOC} '21: 53rd Annual {ACM} {SIGACT} Symposium on Theory of Computing,
  Virtual Event, Italy, June 21-25, 2021}, pages 169--182. {ACM}, 2021.

\bibitem{Cohen-AddadS17}
Vincent Cohen{-}Addad and Chris Schwiegelshohn.
\newblock On the local structure of stable clustering instances.
\newblock In {\em 58th {IEEE} Annual Symposium on Foundations of Computer
  Science, {FOCS} 2017, Berkeley, CA, USA, October 15-17, 2017}, pages 49--60,
  2017.

\bibitem{DrineasFKVV04}
Petros Drineas, Alan~M. Frieze, Ravi Kannan, Santosh Vempala, and V.~Vinay.
\newblock Clustering large graphs via the singular value decomposition.
\newblock {\em Machine Learning}, 56(1-3):9--33, 2004.

\bibitem{ElkinFN17}
Michael Elkin, Arnold Filtser, and Ofer Neiman.
\newblock Terminal embeddings.
\newblock {\em Theor. Comput. Sci.}, 697:1--36, 2017.

\bibitem{Feldman20}
Dan Feldman.
\newblock Core-sets: An updated survey.
\newblock {\em Wiley Interdiscip. Rev. Data Min. Knowl. Discov.}, 10(1), 2020.

\bibitem{FL11}
Dan Feldman and Michael Langberg.
\newblock A unified framework for approximating and clustering data.
\newblock In {\em Proceedings of the 43rd {ACM} Symposium on Theory of
  Computing, {STOC} 2011, San Jose, CA, USA, 6-8 June 2011}, pages 569--578,
  2011.

\bibitem{FeldmanMSW10}
Dan Feldman, Morteza Monemizadeh, Christian Sohler, and David~P. Woodruff.
\newblock Coresets and sketches for high dimensional subspace approximation
  problems.
\newblock In {\em Proceedings of the Twenty-First Annual {ACM-SIAM} Symposium
  on Discrete Algorithms, {SODA} 2010, Austin, Texas, USA, January 17-19,
  2010}, pages 630--649, 2010.

\bibitem{FeldmanSS20}
Dan Feldman, Melanie Schmidt, and Christian Sohler.
\newblock Turning big data into tiny data: Constant-size coresets for k-means,
  pca, and projective clustering.
\newblock {\em {SIAM} J. Comput.}, 49(3):601--657, 2020.

\bibitem{FKW19}
Zhili Feng, Praneeth Kacham, and David~P. Woodruff.
\newblock Strong coresets for subspace approximation and k-median in nearly
  linear time.
\newblock {\em CoRR}, abs/1912.12003, 2019.

\bibitem{FengKW21}
Zhili Feng, Praneeth Kacham, and David~P. Woodruff.
\newblock Dimensionality reduction for the sum-of-distances metric.
\newblock In Marina Meila and Tong Zhang, editors, {\em Proceedings of the 38th
  International Conference on Machine Learning, {ICML} 2021, 18-24 July 2021,
  Virtual Event}, volume 139 of {\em Proceedings of Machine Learning Research},
  pages 3220--3229. {PMLR}, 2021.

\bibitem{FGSSS13}
Hendrik Fichtenberger, Marc Gill{\'{e}}, Melanie Schmidt, Chris Schwiegelshohn,
  and Christian Sohler.
\newblock {BICO:} {BIRCH} meets coresets for k-means clustering.
\newblock In {\em Algorithms - {ESA} 2013 - 21st Annual European Symposium,
  Sophia Antipolis, France, September 2-4, 2013. Proceedings}, pages 481--492,
  2013.

\bibitem{HaK07}
Sariel Har{-}Peled and Akash Kushal.
\newblock Smaller coresets for k-median and k-means clustering.
\newblock {\em Discrete {\&} Computational Geometry}, 37(1):3--19, 2007.

\bibitem{HaM04}
Sariel Har{-}Peled and Soham Mazumdar.
\newblock On coresets for k-means and k-median clustering.
\newblock In {\em Proceedings of the 36th Annual {ACM} Symposium on Theory of
  Computing, Chicago, IL, USA, June 13-16, 2004}, pages 291--300, 2004.

\bibitem{HuangJLW18}
Lingxiao Huang, Shaofeng~H.{-}C. Jiang, Jian Li, and Xuan Wu.
\newblock Epsilon-coresets for clustering (with outliers) in doubling metrics.
\newblock In {\em 59th {IEEE} Annual Symposium on Foundations of Computer
  Science, {FOCS} 2018, Paris, France, October 7-9, 2018}, pages 814--825,
  2018.

\bibitem{HuangJV19}
Lingxiao Huang, Shaofeng~H.{-}C. Jiang, and Nisheeth~K. Vishnoi.
\newblock Coresets for clustering with fairness constraints.
\newblock In {\em Advances in Neural Information Processing Systems 32: Annual
  Conference on Neural Information Processing Systems 2019, NeurIPS 2019, 8-14
  December 2019, Vancouver, BC, Canada}, pages 7587--7598, 2019.

\bibitem{HuangSV20}
Lingxiao Huang, K.~Sudhir, and Nisheeth~K. Vishnoi.
\newblock Coresets for regressions with panel data.
\newblock In Hugo Larochelle, Marc'Aurelio Ranzato, Raia Hadsell,
  Maria{-}Florina Balcan, and Hsuan{-}Tien Lin, editors, {\em Advances in
  Neural Information Processing Systems 33: Annual Conference on Neural
  Information Processing Systems 2020, NeurIPS 2020, December 6-12, 2020,
  virtual}, 2020.

\bibitem{HuangSV21}
Lingxiao Huang, K.~Sudhir, and Nisheeth~K. Vishnoi.
\newblock Coresets for time series clustering.
\newblock In Marc'Aurelio Ranzato, Alina Beygelzimer, Yann~N. Dauphin, Percy
  Liang, and Jennifer~Wortman Vaughan, editors, {\em Advances in Neural
  Information Processing Systems 34: Annual Conference on Neural Information
  Processing Systems 2021, NeurIPS 2021, December 6-14, 2021, virtual}, pages
  22849--22862, 2021.

\bibitem{huang2020coresets}
Lingxiao Huang and Nisheeth~K. Vishnoi.
\newblock Coresets for clustering in euclidean spaces: importance sampling is
  nearly optimal.
\newblock In {\em Proccedings of the 52nd Annual {ACM} {SIGACT} Symposium on
  Theory of Computing, {STOC} 2020}, 2020.

\bibitem{JKLZ21}
Shaofeng~H.{-}C. Jiang, Robert Krauthgamer, Jianing Lou, and Yubo Zhang.
\newblock Coresets for kernel clustering.
\newblock {\em CoRR}, abs/2110.02898, 2021.

\bibitem{KarninL19}
Zohar~S. Karnin and Edo Liberty.
\newblock Discrepancy, coresets, and sketches in machine learning.
\newblock In Alina Beygelzimer and Daniel Hsu, editors, {\em Conference on
  Learning Theory, {COLT} 2019, 25-28 June 2019, Phoenix, AZ, {USA}}, volume~99
  of {\em Proceedings of Machine Learning Research}, pages 1975--1993. {PMLR},
  2019.

\bibitem{LS10}
Michael Langberg and Leonard~J. Schulman.
\newblock Universal $\varepsilon$-approximators for integrals.
\newblock In {\em Proceedings of the Twenty-First Annual {ACM-SIAM} Symposium
  on Discrete Algorithms, {SODA} 2010, Austin, Texas, USA, January 17-19,
  2010}, pages 598--607, 2010.

\bibitem{MahabadiMMR18}
Sepideh Mahabadi, Konstantin Makarychev, Yury Makarychev, and Ilya~P.
  Razenshteyn.
\newblock Nonlinear dimension reduction via outer bi-lipschitz extensions.
\newblock In {\em Proceedings of the 50th Annual {ACM} {SIGACT} Symposium on
  Theory of Computing, {STOC} 2018, Los Angeles, CA, USA, June 25-29, 2018},
  pages 1088--1101, 2018.

\bibitem{MaiMR21}
Tung Mai, Cameron Musco, and Anup Rao.
\newblock Coresets for classification - simplified and strengthened.
\newblock In Marc'Aurelio Ranzato, Alina Beygelzimer, Yann~N. Dauphin, Percy
  Liang, and Jennifer~Wortman Vaughan, editors, {\em Advances in Neural
  Information Processing Systems 34: Annual Conference on Neural Information
  Processing Systems 2021, NeurIPS 2021, December 6-14, 2021, virtual}, pages
  11643--11654, 2021.

\bibitem{MakarychevMR19}
Konstantin Makarychev, Yury Makarychev, and Ilya~P. Razenshteyn.
\newblock Performance of johnson-lindenstrauss transform for \emph{k}-means and
  \emph{k}-medians clustering.
\newblock In {\em Proceedings of the 51st Annual {ACM} {SIGACT} Symposium on
  Theory of Computing, {STOC}}, pages 1027--1038, 2019.

\bibitem{massart2007}
Pascal Massart.
\newblock Concentration inequalities and model selection.
\newblock 2007.

\bibitem{MK18}
Alejandro Molina, Alexander Munteanu, and Kristian Kersting.
\newblock Core dependency networks.
\newblock In Sheila~A. McIlraith and Kilian~Q. Weinberger, editors, {\em
  Proceedings of the Thirty-Second {AAAI} Conference on Artificial
  Intelligence, (AAAI-18), the 30th innovative Applications of Artificial
  Intelligence (IAAI-18), and the 8th {AAAI} Symposium on Educational Advances
  in Artificial Intelligence (EAAI-18), New Orleans, Louisiana, USA, February
  2-7, 2018}, pages 3820--3827. {AAAI} Press, 2018.

\bibitem{MunteanuS18}
Alexander Munteanu and Chris Schwiegelshohn.
\newblock Coresets-methods and history: {A} theoreticians design pattern for
  approximation and streaming algorithms.
\newblock {\em K{\"{u}}nstliche Intell.}, 32(1):37--53, 2018.

\bibitem{MunteanuSSW18}
Alexander Munteanu, Chris Schwiegelshohn, Christian Sohler, and David~P.
  Woodruff.
\newblock On coresets for logistic regression.
\newblock In Samy Bengio, Hanna~M. Wallach, Hugo Larochelle, Kristen Grauman,
  Nicol{\`{o}} Cesa{-}Bianchi, and Roman Garnett, editors, {\em Advances in
  Neural Information Processing Systems 31: Annual Conference on Neural
  Information Processing Systems 2018, NeurIPS 2018, December 3-8, 2018,
  Montr{\'{e}}al, Canada}, pages 6562--6571, 2018.

\bibitem{NaN18}
Shyam Narayanan and Jelani Nelson.
\newblock Optimal terminal dimensionality reduction in euclidean space.
\newblock In Moses Charikar and Edith Cohen, editors, {\em Proceedings of the
  51st Annual {ACM} {SIGACT} Symposium on Theory of Computing, {STOC} 2019,
  Phoenix, AZ, USA, June 23-26, 2019}, pages 1064--1069. {ACM}, 2019.

\bibitem{PhillipsT20}
Jeff~M. Phillips and Wai~Ming Tai.
\newblock Near-optimal coresets of kernel density estimates.
\newblock {\em Discret. Comput. Geom.}, 63(4):867--887, 2020.

\bibitem{Pis99}
Gilles Pisier.
\newblock {\em {The volume of convex bodies and Banach space geometry.}}
\newblock {Cambridge Tracts in Mathematics. 94}, 1999.

\bibitem{RudraW14}
Atri Rudra and Mary Wootters.
\newblock Every list-decodable code for high noise has abundant near-optimal
  rate puncturings.
\newblock In David~B. Shmoys, editor, {\em Symposium on Theory of Computing,
  {STOC} 2014, New York, NY, USA, May 31 - June 03, 2014}, pages 764--773.
  {ACM}, 2014.

\bibitem{SSS19}
Melanie Schmidt, Chris Schwiegelshohn, and Christian Sohler.
\newblock Fair coresets and streaming algorithms for fair k-means.
\newblock In {\em Approximation and Online Algorithms - 17th International
  Workshop, {WAOA} 2019, Munich, Germany, September 12-13, 2019, Revised
  Selected Papers}, pages 232--251, 2019.

\bibitem{SchwiegelshohnS22}
Chris Schwiegelshohn and Omar~Ali Sheikh{-}Omar.
\newblock An empirical evaluation of k-means coresets.
\newblock In Shiri Chechik, Gonzalo Navarro, Eva Rotenberg, and Grzegorz
  Herman, editors, {\em 30th Annual European Symposium on Algorithms, {ESA}
  2022, September 5-9, 2022, Berlin/Potsdam, Germany}, volume 244 of {\em
  LIPIcs}, pages 84:1--84:17. Schloss Dagstuhl - Leibniz-Zentrum f{\"{u}}r
  Informatik, 2022.

\bibitem{SohlerW18}
Christian Sohler and David~P. Woodruff.
\newblock Strong coresets for k-median and subspace approximation: Goodbye
  dimension.
\newblock In {\em 59th {IEEE} Annual Symposium on Foundations of Computer
  Science, {FOCS} 2018, Paris, France, October 7-9, 2018}, pages 802--813,
  2018.

\bibitem{TukanMF20}
Murad Tukan, Alaa Maalouf, and Dan Feldman.
\newblock Coresets for near-convex functions.
\newblock In Hugo Larochelle, Marc'Aurelio Ranzato, Raia Hadsell,
  Maria{-}Florina Balcan, and Hsuan{-}Tien Lin, editors, {\em Advances in
  Neural Information Processing Systems 33: Annual Conference on Neural
  Information Processing Systems 2020, NeurIPS 2020, December 6-12, 2020,
  virtual}, 2020.

\end{thebibliography}
\newpage

%
%
%
%

\end{document}